\pgfplotsset{compat=1.5}
\newenvironment{proof}{\noindent{\bf Proof : \ }}{\hfill$\Box$\par\medskip}
\newtheorem{theorem}{Theorem}[section]
\newenvironment{proofof}[1]{\begin{trivlist} \item {\bf Proof
#1:~~}}
  {\qed\end{trivlist}}
\newcommand{\namedref}[2]{\hyperref[#2]{#1~\ref*{#2}}}
\newcommand{\thmlab}[1]{\label{thm:#1}}
\newcommand{\thmref}[1]{\namedref{Theorem}{thm:#1}}
\newcommand{\lemlab}[1]{\label{lem:#1}}
\newcommand{\lemref}[1]{\namedref{Lemma}{lem:#1}}
\newcommand{\seclab}[1]{\label{sec:#1}}
\newcommand{\secref}[1]{\namedref{Section}{sec:#1}}
\newcommand{\alglab}[1]{\label{alg:#1}}
\renewcommand{\algref}[1]{\namedref{Algorithm}{alg:#1}}
\newcommand\norm[1]{\left\lVert#1\right\rVert}
\renewcommand{\O}[1]{\ensuremath{\mathcal{O}\left(#1\right)}}
\newcommand{\eps}{\varepsilon}
\newcommand{\nnz}{\text{nnz}}
\def \Sigmab    {\mdef{\mathbf{\Sigma}}}
\def \A    {\mdef{\mathbf{A}}}
\def \U    {\mdef{\mathbf{U}}}
\def \X    {\mdef{\mathbf{X}}}
\def \Z    {\mdef{\mathbf{Z}}}
\def \x    {\mdef{\mathbf{x}}}
\def \y    {\mdef{\mathbf{y}}}
\def \z    {\mdef{\mathbf{z}}}
\newcommand{\zero}{\mathbf{0}}
\newcommand{\mdef}[1]{{\ensuremath{#1}}\xspace}  
\DeclareMathOperator*{\argmax}{argmax}
\DeclareMathOperator*{\poly}{poly}
\DeclareMathOperator*{\trace}{Tr}
\newcommand{\abs}[1]{\mdef{\left|#1\right|}}         
\newcommand{\ignore}[1]{}
\newif\ifnotes\notestrue 
\newcommand{\samson}[1]{\textcolor{purple}{{\bf (Samson:} {#1}{\bf ) }} \marginpar{\tiny\bf
             \begin{minipage}[t]{0.5in}
               \raggedright S:
            \end{minipage}}}            							
\newcommand{\samson}[1]{}
\definecolor{darkred}{rgb}{0.55, 0.0, 0.0}
\renewcommand*{\@fnsymbol}[1]{\textcolor{darkred}{\ensuremath{\ifcase#1\or *\or \dagger\or \ddagger\or
 \mathsection\or \triangledown\or \mathparagraph\or \|\or **\or \dagger\dagger
   \or \ddagger\ddagger \else\@ctrerr\fi}}}
\providecommand{\email}[1]{\href{mailto:#1}{\nolinkurl{#1}\xspace}}
\definecolor{mahogany}{rgb}{0.75, 0.25, 0.0}
\definecolor{darkblue}{rgb}{0.0, 0.0, 0.55}
\newcommand{\ub}{\mathbf{u}}
\newcommand{\xb}{\mathbf{x}}
\newcommand{\yb}{\mathbf{y}}
\newcommand{\zb}{\mathbf{z}}
\newcommand{\Ab}{\mathbf{A}}
\newcommand{\Eb}{\mathbf{E}}
\newcommand{\Gb}{\mathbf{G}}
\newcommand{\Ib}{\mathbf{I}}
\newcommand{\Rb}{\mathbf{R}}
\newcommand{\Ub}{\mathbf{U}}
\newcommand{\Vb}{\mathbf{V}}
\newcommand{\Xb}{\mathbf{X}}
\newcommand{\Zb}{\mathbf{Z}}
\title{Approximation Algorithms for Sparse Principal Component Analysis}
\author{Agniva Chowdhury\thanks{Department of Statistics, Purdue University, West Lafayette, IN, USA, 47906. \texttt{chowdhu5@purdue.edu}}
\and
Petros Drineas\thanks{Department of Computer Science, Purdue University, West Lafayette, IN, USA, 47906. \texttt{pdrineas@purdue.edu}}
\and
David P. Woodruff\thanks{School of Computer Science, Carnegie Mellon University, Pittsburgh, PA, USA, 15213. \texttt{dwoodruf@andrew.cmu.edu}}
\and
Samson Zhou\thanks{School of Computer Science, Carnegie Mellon University, Pittsburgh, PA, USA, 15213. \texttt{samsonzhou@gmail.com}}
}
\begin{document}

\maketitle

\begin{abstract}
Principal component analysis (PCA) is a widely used dimension reduction technique in machine learning and multivariate statistics. To improve the interpretability of PCA, various approaches to obtain sparse principal direction loadings have been proposed, which are termed Sparse Principal Component Analysis (SPCA). In this paper, we present thresholding as a provably accurate, polynomial time, approximation algorithm for the SPCA problem, without imposing any restrictive assumptions on the input covariance matrix. Our first thresholding algorithm using the Singular Value Decomposition is conceptually simple; is faster than current state-of-the-art; and performs well in practice. On the negative side, our (novel) theoretical bounds do not accurately predict the strong practical performance of this approach.  The second algorithm solves a well-known semidefinite programming relaxation and then uses a novel, two step, deterministic thresholding scheme to compute a sparse principal vector. It works very well in practice and, remarkably, this solid practical performance is accurately predicted by our theoretical bounds, which bridge the theory-practice gap better than current state-of-the-art. 
\end{abstract}

\section{Introduction}
Principal Component Analysis (PCA) and the related Singular Value Decomposition (SVD) are fundamental data analysis and dimensionality reduction tools in a wide range of areas including machine learning, multivariate statistics and many others. 
These tools return a set of orthogonal vectors of decreasing importance that are often interpreted as fundamental latent factors that underlie the observed data. 
Even though the vectors returned by PCA and SVD have strong optimality properties, they are notoriously difficult to interpret in terms of the underlying processes generating the data~\citep{MD09}, since they are linear combinations of \textit{all} available data points or \textit{all} available features. 
The concept of Sparse Principal Components Analysis (SPCA) was introduced in the seminal work of~\citep{AGJ2007}, where sparsity constraints were enforced on the singular vectors in order to improve interpretability. 
A prominent example where sparsity improves interpretability is document analysis, where sparse principal components can be mapped to specific topics by inspecting the (few) keywords in their support~\citep{AGJ2007,MD09,PDK2013}.

Formally, given a positive semidefinite (PSD) matrix $\Ab \in \mathbb{R}^{n \times n}$, SPCA can be defined as follows:\footnote{Recall that the $p$-th power of the $\ell_p$ norm of a vector $\x \in \mathbb{R}^n$ is defined as $\|\xb\|_p^p = \sum_{i=1}^n \abs{\x_i}^p$ for $0<p<\infty$. For $p=0$, $\|x\|_0$ is a semi-norm denoting the number of non-zero entries of $x$.}
\begin{flalign}\label{eqn:spca}
\mathcal{Z}^*= \max\limits_{\xb \in \mathbb{R}^n,\ \|\xb\|_2\leq 1}\x^\top\A\x,~~~~~\text{subject to}\,  \|\xb \|_0\le k.
\end{flalign}
In the above formulation, $\Ab$ is a covariance matrix representing, for example, all pairwise feature or object similarities for an underlying data matrix. 
Therefore, SPCA can be applied to either the object or feature space of the data matrix, while the parameter $k$ controls the sparsity of the resulting vector and is part of the input. 
Let $\xb^*$ denote a vector that achieves the optimal value ${\mathcal Z}^*$ in the above formulation. 
Intuitively, the optimization problem of eqn.~\eqref{eqn:spca} seeks a \textit{sparse}, unit norm vector $\xb^*$ that maximizes the data variance.

It is well-known that solving the above optimization problem is \textsf{NP}-hard~\citep{MWA2006} and that its hardness is due to the sparsity constraint. 
Indeed, if the sparsity constraint were removed, then the resulting optimization problem can be easily solved by computing the top left or right singular vector of $\Ab$ and its maximal value $\mathcal{Z}^*$ is equal to the top singular value of $\Ab$.

\noindent \textbf{Notation.} We use bold letters to denote matrices and vectors. 
For a matrix $\Ab \in \mathbb{R}^{n \times n}$, we denote its $(i,j)$-th entry by $A_{i,j}$; its $i$-th row by $\Ab_{i*}$, and its $j$-th column by $\Ab_{*j}$; its 2-norm by $\|\Ab\|_2 = \max_{\xb \in \mathbb{R}^n,\ \|\xb\|_2=1} \|\Ab \xb\|_2$; and its (squared) Frobenius norm by $\norm{\A}_F^2=\sum_{i,j} A_{i,j}^2$. 
We use the notation $\Ab \succeq 0$ to denote that the matrix $\Ab$ is symmetric positive semidefinite (PSD) and $\trace(\Ab) = \sum_i A_{i,i}$ to denote its trace, which is also equal to the sum of its singular values. 
Given a PSD matrix $\Ab \in \mathbb{R}^{n \times n}$, its Singular Value Decomposition is given by $\Ab = \Ub \Sigmab \Ub^T$, where $\Ub$ is the matrix of left/right singular vectors and $\Sigmab$ is the diagonal matrix of singular values.  

\subsection{Our Contributions}
Thresholding is a simple algorithmic concept, where each coordinate of, say, a vector is retained if its value is sufficiently high; otherwise, it is set to zero. Thresholding naturally preserves entries that have large magnitude while creating sparsity by eliminating small entries.
Thus, thresholding seems like a logical strategy for SPCA: after computing a dense vector that approximately solves a PCA problem, perhaps with additional constraints, thresholding can be used to sparsify it.

Our first approach (\texttt{spca-svd}, \secref{sec:pcp}) is a simple thresholding-based algorithm for SPCA that leverages the fact that the top singular vector is an optimal solution for the SPCA problem without the sparsity constraint. Our algorithm actually uses a thresholding scheme that leverages the top few singular vectors of the underlying covariance matrix; it  is simple and intuitive, yet offers the first of its kind runtime vs. accuracy bounds. Our algorithm returns a vector that is provably sparse and, when applied to the input covariance matrix $\Ab$, provably captures the optimal solution $\mathcal{Z}^*$ up to a small additive error. 
Indeed, our output vector has a sparsity that depends on $k$ (the target sparsity of the original SPCA problem of eqn.~(\ref{eqn:spca})) and $\varepsilon$ (an accuracy parameter between zero and one). Our analysis provides unconditional guarantees for the accuracy of the solution of the proposed  thresholding scheme. To the best of our knowledge, no such analyses have appeared in prior work (see \secref{sxn:prior} for details). 
We emphasize that our approach only requires an approximate singular value decomposition and, as a result, \texttt{spca-svd} runs very quickly. In practice, \texttt{spca-svd} is faster than current state-of-the-art and almost as accurate, at least in the datasets that we used in our empirical evaluations. 
However, as shown in \secref{sec:exp}, there is a clear theory-practice gap, since our theoretical bounds fail to predict the practical performance of \texttt{spca-svd}, which motivated us to look for more elaborate thresholding schemes that come with improved theoretical accuracy guarantees.

Our second approach (\texttt{spca-sdp}, \secref{sxn:SDPrelax}) uses a more elaborate semidefinite programming (SDP) approach with (relaxed) sparsity constraints to compute a starting point on which thresholding strategies are applied. Our algorithm provides novel bounds for the following standard convex relaxation of the problem of eqn.~\eqref{eqn:spca}:
\begin{equation}\label{eqn:spdrelax}
\max_{\Zb \in \mathbb{R}^{n \times n},\ \Zb \succeq 0}\ \trace(\A\Z)\ \, \text{s.t.}\ \trace(\Z)\le 1\ \text{and} \ \sum|Z_{i,j}|\le k.
\end{equation}
It is well-known that the optimal solution to eqn.~(\ref{eqn:spdrelax}) is greater than or equal to the optimal solution to eqn.~(\ref{eqn:spca}). 
We contribute a novel, two-step deterministic thresholding scheme that converts $\Zb \in \mathbb{R}^{n \times n}$ to a vector $\zb\in \mathbb{R}^n$ with sparsity $\O{\nicefrac{\beta^2k^2}{\eps^2}}$ and satisfies\footnote{For simplicity of presentation and following the lines of~\citep{FountoulakisKKD17}, we assume that the rows and columns of the matrix $\Ab$ have unit norm; this assumption can be removed as in~\citep{FountoulakisKKD17}.} $\z^\top\A\z\ge\frac{1}{\alpha}\cdot \mathcal{Z}^*-\eps$.
Here $\alpha$ and $\beta$ are parameters of the optimal solution matrix $\Zb$ that describe the extent to which the SDP relaxation of eqn.~(\ref{eqn:spdrelax}) is able to capture the original problem. We empirically demonstrate that these quantities are close to one for the (diverse) datasets used in our empirical evaluation. As a result (see~\secref{sec:exp}), we demonstrate that the empirical performance of \texttt{spca-sdp} is much better predicted by our theory, unlike \texttt{spca-svd} and current state-of-the-art. To the best of our knowledge, this is the first analysis of a rounding scheme for the convex relaxation of eqn.~(\ref{eqn:spdrelax}) that does not assume a specific model for the covariance matrix $\Ab$. However, this approach introduces a major runtime bottleneck for our algorithm, namely solving an SDP.

An additional contribution of our work is that, unlike prior work, our algorithms have clear tradeoffs between quality of approximation and output sparsity. Indeed, by increasing the density of the final SPCA vector, one can improve the amount of variance that is captured by our SPCA output. See~\thmref{thm:pcpmain} and~\thmref{thm:spca:sdp:det} for details on this sparsity vs. accuracy tradeoff for \texttt{spca-svd} and \texttt{spca-sdp}, respectively.


%
%

\textbf{Applications to Sparse Kernel PCA.}
Our algorithms have immediate applications to sparse kernel PCA (SKPCA), where the input matrix $\Ab\in\mathbb{R}^{n\times n}$ is instead implicitly given as a kernel matrix whose entry $(i,j)$ is the value $k(i,j):=\langle\phi(\X_{i*}),\phi(\X_{j*})\rangle$ for some kernel function $\phi$ that implicitly maps an observation vector into some high-dimensional feature space. 
Although $\Ab$ is not explicit, we can query all $\O{n^2}$ entries of $\Ab$ using $\O{n^2}$ time assuming an oracle that computes the kernel function $k$. 
We can then subsequently apply our SPCA algorithms and achieve polynomial runtime with the same approximation guarantees. 

\textbf{Experiments.}
Finally, we evaluate our algorithms on a variety of real and synthetic datasets in order to practically assess their performance. As discussed earlier, from an accuracy perspective, our algorithms perform comparably to current state-of-the-art. However, \texttt{spca-svd} is faster than current state-of-the-art. Importantly, we evaluate the tightness of the theoretical bounds on the approximation accuracy: the theoretical bounds for \texttt{spca-svd} and current state-of-the-art fail to predict the approximation accuracy of the respective algorithms in practice. However, the theoretical bounds for \texttt{spca-sdp} are much tighter, essentially bridging the theory-practice gap, at least in the datasets used in our evaluations. 

\subsection{Prior work}
\seclab{sxn:prior}
SPCA was formally introduced by~\citep{AGJ2007}; however, previously studied PCA approaches based on rotating~\citep{jolliffe1995rotation} or thresholding~\citep{CJ95} the top singular vector of the input matrix seemed to work well, at least in practice, given sparsity constraints. 
Following~\citep{AGJ2007}, there has been an abundance of interest in SPCA. 
\citep{jolliffe2003modified} considered LASSO (SCoTLASS) on an $\ell_1$ relaxation of the problem, while~\citep{zou2003regression} considered a non-convex regression-type approximation, penalized similar to LASSO.
Additional heuristics based on LASSO~\citep{ando2009approximating} and non-convex $\ell_1$ regularizations~\citep{zou2003regression,zou2006sparse,sriperumbudur2007sparse,shen2008sparse} have also been explored. 
Random sampling approaches based on non-convex $\ell_1$ relaxations~\citep{FountoulakisKKD17} have also been studied; we highlight that unlike our approach,~\citep{FountoulakisKKD17} solved a non-convex relaxation of the SPCA problem and thus perhaps relied on locally optimal solutions.  
\cite{beck2016sparse} presented a coordinate-wise optimality condition for SPCA and designed algorithms to find points satisfying that condition. While this method is guaranteed to converge to stationary points, it is also susceptible to getting trapped at local solutions. 
In another recent work \cite{Yuan_2019_CVPR}, the authors came up with a decomposition algorithm to solve the sparse generalized eigenvalue problem using a random or a swapping strategy. 
However, the underlying method  needs to solve a subproblem globally using combinatorial search methods at each iteration, which may fail for a large sparsity parameter $k$.
Additionally,~\citep{moghaddam2006spectral} considered a branch-and-bound heuristic motivated by greedy spectral ideas. 
\citep{journee2010generalized,PDK2013,Kuleshov13,yuan2013truncated} further explored other spectral approaches based on iterative methods similar to the power method. 
\citep{yuan2013truncated} specifically designed a sparse PCA algorithm with early stopping for the power method, based on the target sparsity. 

Another line of work focused on using semidefinite programming (SDP) relaxations~\citep{AGJ2007,d2008optimal,Amini09,dOrsiKNS20}. 
Notably,~\citep{Amini09} achieved provable theoretical guarantees regarding the SDP and thresholding approach of~\citep{AGJ2007} in a \textit{specific}, high-dimensional spiked covariance model, in which a base matrix is perturbed by adding a sparse maximal eigenvector. 
In other words, the input matrix is the identity matrix plus a ``spike'', i.e., a sparse rank-one matrix. 

Despite the variety of heuristic-based sparse PCA approaches, very few theoretical guarantees have been provided for SPCA; this is partially explained by a line of hardness-of-approximation results. 
The sparse PCA problem is well-known to be $\mathsf{NP}$-Hard~\citep{MWA2006}. 
\citep{magdon2017np} shows that if the input matrix is not PSD, then even the \emph{sign} of the optimal value cannot be determined in polynomial time unless $\mathsf{P=NP}$, ruling out any multiplicative approximation algorithm. 
In the case where the input matrix is PSD,~\citep{ChanPR16} shows that it is $\mathsf{NP}$-hard to approximate the optimal value up to multiplicative $(1+\varepsilon)$ error, ruling out any polynomial-time approximation scheme (PTAS). 
Moreover, they show Small-Set Expansion hardness for any polynomial-time constant factor approximation algorithm and also that the standard SDP relaxation might have an exponential gap.

We conclude by summarizing prior work that offers provable guarantees (beyond the work of~\citep{Amini09}), typically given \textit{some assumptions about the input matrix}.~\citep{d2014approximation} showed that the SDP relaxation can be used to find provable bounds when the covariance input matrix is formed by a number of data points sampled from Gaussian models with a single sparse singular vector. Perhaps the most interesting, theoretically provable, prior work is~\citep{PDK2013}, which presented a combinatorial algorithm that analyzed a specific set of vectors in a low-dimensional eigenspace of the input matrix and presented relative error guarantees for the optimal objective, given the assumption that the input covariance matrix has a decaying spectrum. From a theoretical perspective, this method is the current state-of-the-art and we will present a detailed comparison with our approaches in \secref{sec:exp}.~\citep{asteris2011sparse} gave a polynomial-time algorithm that solves sparse PCA \textit{exactly} for input matrices of constant rank. 
~\citep{ChanPR16} showed that sparse PCA can be approximated in polynomial time within a factor of~$n^{-1/3}$ and also highlighted an additive PTAS of ~\citep{asteris2015sparse} based on the idea of finding multiple disjoint components and solving bipartite maximum weight matching problems. 
This PTAS needs time $n^{\poly(1/\eps)}$, whereas all of our algorithms have running times that are a low-degree polynomial in $n$.

\section{SPCA via SVD Thresholding}
\seclab{sec:pcp}
To achieve nearly input sparsity runtime, our first thresholding algorithm is based upon using the top $\ell$ right singular vectors of the PSD matrix $\Ab$. Given $\Ab$ and an accuracy parameter $\varepsilon$, our approach first computes $\Sigmab_{\ell} \in \mathbb{R}^{\ell \times \ell}$ (the diagonal matrix of the top $\ell$ singular values of $\Ab$) and $\Ub_{\ell} \in \mathbb{R}^{n \times \ell}$ (the matrix of the top $\ell$ right singular vectors of $\Ab$), for $\ell = 1/\varepsilon$. Then, it \textit{deterministically} selects a subset of $\O{k/\varepsilon^3}$ columns of $\Sigmab^{1/2}_{\ell}\Ub_{\ell}^\top$ using a simple thresholding scheme based on the norms of the columns of $\Sigmab^{1/2}_{\ell}\Ub_{\ell}^\top$. (Recall that $k$ is the sparsity parameter of the SPCA problem.)
In the last step, it returns the \textit{top right singular vector} of the matrix consisting of the chosen columns of $\Sigmab_{\ell}^{1/2}\Ub_{\ell}^\top$. Notice that this right singular vector is an $\O{k/\varepsilon^3}$-dimensional vector, which is finally expanded to a vector in $\mathbb{R}^n$ by appropriate padding with zeros. This sparse vector is our approximate solution to the SPCA problem of eqn.~(\ref{eqn:spca}).

This simple algorithm is somewhat reminiscent of prior thresholding approaches for SPCA. However, to the best of our knowledge, no provable a priori bounds were known for such algorithms without strong assumptions on the input matrix. This might be due to the fact that prior approaches focused on thresholding only the top right singular vector of $\Ab$, whereas our approach thresholds the top $\ell = 1/\varepsilon$ right singular vectors of $\Ab$. This slight relaxation allows us to present provable bounds for the proposed algorithm.

In more detail, let the SVD of $\Ab$ be $\Ab = \Ub \Sigmab \Ub^T$. Let $\Sigmab_{\ell} \in \mathbb{R}^{\ell \times \ell}$ be the diagonal matrix of the top $\ell$ singular values and let $\Ub_{\ell} \in \mathbb{R}^{n \times \ell}$ be the matrix of the top $\ell$ right (or left) singular vectors. Let $R = \{i_1,\ldots,i_{|R|}\}$ be the set of indices of rows of $\Ub_{\ell}$ that have squared norm at least $\varepsilon^2/k$ and let $\bar{R}$ be its complement. Here $|R|$ denotes the cardinality of the set $R$ and $R \cup \bar{R} = \{1,\ldots,n\}$. Let $\Rb \in \mathbb{R}^{n \times |R|}$ be a sampling matrix that selects\footnote{Each column of $\Rb$ has a single non-zero entry (set to one), corresponding to one of the $|R|$ selected columns. Formally, $\Rb_{i_t,t}=1$ for $t=1,\ldots, |R|$; all other entries of $\Rb$ are set to zero.}
the columns of $\Ub_{\ell}$ whose indices are in the set $R$. Given this notation, we are now ready to state~\algref{alg:sparse:pca}.
\begin{algorithm}[!htb]
\caption{\texttt{spca-svd}: fast thresholding SPCA via SVD}
\alglab{alg:sparse:pca}
\begin{algorithmic}[1]
\Require{$\A\in\mathbb{R}^{n\times n}$, sparsity $k$, error parameter $\eps>0$.}
\Ensure{$\y \in \mathbb{R}^n$ such that $\|y\|_2=1$ and $\|\y\|_0 = k/\varepsilon^2$.}
\State{$\ell \gets 1/\varepsilon$;}
\State{Compute $\Ub_{\ell} \in \mathbb{R}^{n \times \ell}$ (top $\ell$ left singular vectors of $\Ab$) and $\Sigmab_{\ell} \in \mathbb{R}^{\ell \times \ell}$
(the top $\ell$ singular values of $\Ab$)};
\State{Let $R = \{i_1,\ldots, i_{|R|}\}$ be the set of rows of $\Ub_{\ell}$ with squared norm at least $\eps^2/k$ and let $\Rb \in \mathbb{R}^{n \times |R|}$ be the associated sampling matrix (see text for details);}
\State{$\y \in \mathbb{R}^{|R|}\gets\argmax_{\norm{\x}_2=1}\norm{\Sigmab_{\ell}^{1/2}\Ub_\ell^\top\Rb\x}_2^2$;}
\State{\textbf{return} $\zb = \Rb\y \in \mathbb{R}^n$;}
\end{algorithmic}
\end{algorithm}

Notice that $\Rb\y$ satisfies $\|\Rb\y\|_2=\|\y\|_2=1$ (since $\Rb$ has orthogonal columns) and $\|\Rb\y\|_0=|R|$. Since $R$ is the set of rows of $\Ub_{\ell}$ with squared norm at least $\eps^2/k$ and $\|\Ub_{\ell}\|_F^2 = \ell = 1/\varepsilon$, it follows that $|R|\leq k/\varepsilon^3$. 
Thus, the vector returned by~\algref{alg:sparse:pca} has $k/\varepsilon^3$ sparsity and unit norm. 
\begin{restatable}{theorem}{thmpcpmain}
\thmlab{thm:pcpmain}
Let $k$ be the sparsity parameter and $\eps \in (0,1]$ be the accuracy parameter. Then, the vector $\zb\in\mathbb{R}^{n}$ (the output of~\algref{alg:sparse:pca}) has sparsity $k/\varepsilon^3$, unit norm, and satisfies 
$$\zb^\top \Ab \zb \geq \mathcal{Z}^* - 3\eps\trace(\A).$$
\end{restatable}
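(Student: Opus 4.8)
The plan is to compare the quadratic form $\zb^\top\Ab\zb$ with the optimal value $\Zcal^*$ via a three-way chain of inequalities, each costing at most $\eps\trace(\Ab)$. Write $\Ab=\Ab_\ell+\Ab_{n-\ell}$, where $\Ab_\ell=\Ub_\ell\Sigmab_\ell\Ub_\ell^\top$ is the best rank-$\ell$ approximation to $\Ab$ and $\Ab_{n-\ell}$ collects the tail singular directions. The first step is to argue that the optimal SPCA vector $\xb^*$ is essentially insensitive to the tail: since $\xb^*$ is $k$-sparse and $\|\Ab_{n-\ell}\|_2=\sigma_{\ell+1}(\Ab)\le\frac{1}{\ell}\trace(\Ab)=\eps\trace(\Ab)$ (the $(\ell+1)$-st singular value is at most the average of the top $\ell+1$), we get $(\xb^*)^\top\Ab_{n-\ell}\xb^*\le\eps\trace(\Ab)$, hence $(\xb^*)^\top\Ab_\ell\xb^*\ge\Zcal^*-\eps\trace(\Ab)$. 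So it suffices to approximately recover the variance that $\xb^*$ captures in the rank-$\ell$ matrix.

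The second step is the thresholding step: I would show that zeroing out the coordinates of $\xb^*$ lying in $\bar R$ (the rows of $\Ub_\ell$ with squared norm below $\eps^2/k$) loses at most $O(\eps)\trace(\Ab)$ in the quadratic form of $\Ab_\ell$. Writing $\Ab_\ell^{1/2}=\Sigmab_\ell^{1/2}\Ub_\ell^\top$ so that $(\xb^*)^\top\Ab_\ell\xb^*=\|\Sigmab_\ell^{1/2}\Ub_\ell^\top\xb^*\|_2^2$, the contribution of the $\bar R$-coordinates can be controlled by a Cauchy–Schwarz / triangle-inequality argument: each such coordinate $j$ contributes a vector $x^*_j\,(\Sigmab_\ell^{1/2}\Ub_\ell^\top)_{*j}$ of Euclidean length at most $|x^*_j|\cdot\|\Sigmab_\ell\|_2^{1/2}\cdot(\eps/\sqrt k)$, and since $\xb^*$ has at most $k$ nonzeros with $\|\xb^*\|_2\le1$, summing over $j\in\bar R$ (using $\sum_j|x^*_j|\le\sqrt k$) gives a total perturbation of norm $O(\eps)\sqrt{\|\Ab_\ell\|_2}\le O(\eps)\sqrt{\trace(\Ab)}$; squaring, together with the bound $(\xb^*)^\top\Ab_\ell\xb^*\le\trace(\Ab)$, yields that the restricted vector $\Rb\Rb^\top\xb^*$ still satisfies $(\Rb^\top\xb^*)^\top(\Sigmab_\ell^{1/2}\Ub_\ell^\top\Rb)^\top(\Sigmab_\ell^{1/2}\Ub_\ell^\top\Rb)(\Rb^\top\xb^*)\ge(\xb^*)^\top\Ab_\ell\xb^*-O(\eps)\trace(\Ab)$. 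One has to track the constants so the cumulative loss is exactly $3\eps\trace(\Ab)$.

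The third step is immediate from optimality of $\yb$ in Line~4 of the algorithm: $\yb$ maximizes $\|\Sigmab_\ell^{1/2}\Ub_\ell^\top\Rb\xb\|_2^2$ over unit vectors, so it does at least as well as the (renormalized) restricted vector $\Rb^\top\xb^*/\|\Rb^\top\xb^*\|_2$; and since $\|\Rb^\top\xb^*\|_2\le1$, the normalization only helps. Finally one reconnects $\|\Sigmab_\ell^{1/2}\Ub_\ell^\top\Rb\yb\|_2^2$ to $\zb^\top\Ab\zb$ by noting $\zb=\Rb\yb$ has its support inside $R$, so $\zb^\top\Ab\zb\ge\zb^\top\Ab_\ell\zb=\|\Sigmab_\ell^{1/2}\Ub_\ell^\top\zb\|_2^2$ (using $\Ab_{n-\ell}\succeq0$), giving $\zb^\top\Ab\zb\ge\Zcal^*-3\eps\trace(\Ab)$. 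The sparsity and unit-norm claims were already established before the theorem statement.

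The main obstacle is the second step: one must be careful that the error from dropping the $\bar R$-coordinates is measured in the right norm and that passing from the additive perturbation of the vector $\Sigmab_\ell^{1/2}\Ub_\ell^\top\xb^*$ to the additive perturbation of its squared norm does not blow up — this is where the crude bound $(\xb^*)^\top\Ab_\ell\xb^*\le\trace(\Ab)$ and the inequality $(a-b)^2\ge a^2-2ab\ge a^2 - b\cdot\text{(bound on }a+b)$ need to be deployed cleanly so that the final constant is exactly $3$. Everything else is bookkeeping with singular values and the defining optimality of the algorithm's output.
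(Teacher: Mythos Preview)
Your proposal is correct and follows essentially the same route as the paper. The paper's proof uses exactly your three ingredients: (i) the rank-$\ell$ truncation bound $(\xb^*)^\top\Ab_{n-\ell}\xb^*\le\sigma_{\ell+1}\le\trace(\Ab)/\ell=\eps\trace(\Ab)$ (packaged there as \lemref{lem:sz:pcp}); (ii) the thresholding bound on $\|\Sigmab_\ell^{1/2}\Ub_\ell^\top\Rb_\perp\Rb_\perp^\top\xb^*\|_2$ obtained column-by-column via $\|(\Ub_\ell)_{j*}\|_2\le\eps/\sqrt{k}$ for $j\in\bar R$ together with $\|\xb^*\|_1\le\sqrt{k}$; (iii) optimality of $\yb$ and the reconnection $\zb^\top\Ab\zb\ge\|\Sigmab_\ell^{1/2}\Ub_\ell^\top\zb\|_2^2$ from $\Ab_{n-\ell}\succeq 0$. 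Your worry about the constant in step~2 is well placed and mirrors the paper exactly: the cross term plus the squared term give $(2\eps+\eps^2)\sigma_1$, which the paper handles by writing ``for sufficiently small $\eps$'' and then ``rescaling $\eps$'' at the end, so the clean constant $3$ is obtained the same (slightly informal) way in both arguments.
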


The intuition behind \thmref{thm:pcpmain} is that we can decompose the value of the optimal solution into the value contributed by the coordinates in $R$, the value contributed by the coordinates outside of $R$, and a cross term. 
The first term we can upper bound by the output of the algorithm, which maximizes with respect to the coordinates in $R$. 
For the latter two terms, we can upper bound the contribution due to the upper bound on the squared row norms of indices outside of $R$ and due to the largest singular value of $\Ub$ being at most the trace of $\Ab$. 
We defer the full proof of \thmref{thm:pcpmain} to the supplementary material. 

The running time of~\algref{alg:sparse:pca} is dominated by the computation of the top $\ell$ singular vectors and singular values of the matrix $\Ab$. 
One could always use the SVD of the full matrix $\Ab$ ($\O{n^3}$ time) to compute the top $\ell$ singular vectors and singular values of $\Ab$. 
In practice, any iterative method, such as subspace iteration using a random initial subspace or the Krylov subspace of the matrix, can be used towards this end. 
We address the inevitable approximation error incurred by such approximate SVD methods below. 

%
Finally, we highlight that, as an intermediate step in the proof of \thmref{thm:pcpmain}, we need to prove the following \lemref{lem:sz:pcp}, which is very much at the heart of our proof of \thmref{thm:pcpmain} and, unlike prior work, allows us to provide provably accurate bounds for the thresholding~\algref{alg:sparse:pca}.
\begin{restatable}{lemma}{lemszpcp}
\lemlab{lem:sz:pcp}
Let $\A\in\mathbb{R}^{n\times n}$ be a PSD matrix and $\Sigmab \in \mathbb{R}^{n \times n}$ (respectively,
$\Sigmab_{\ell} \in \mathbb{R}^{\ell \times \ell}$) be the diagonal matrix of all (respectively, top $\ell$) singular values and let
$\Ub \in \mathbb{R}^{n \times n}$ (respectively, $\Ub_{\ell} \in \mathbb{R}^{n \times \ell}$) be the matrix of all (respectively, top $\ell$) singular vectors. Then, for all unit vectors $\xb \in \mathbb{R}^n$,
$$\norm{\Sigmab_{\ell}^{1/2} \Ub_{\ell}^\top\x}_2^2 \ge \norm{\Sigmab^{1/2} \Ub^\top \x}_2^2- \eps\trace(\A).$$
\end{restatable}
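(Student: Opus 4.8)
Let me plan a proof of \lemref{lem:sz:pcp}. The statement compares $\norm{\Sigmab_{\ell}^{1/2}\Ub_{\ell}^\top \x}_2^2$ against $\norm{\Sigmab^{1/2}\Ub^\top\x}_2^2$. Writing $\Ab = \Ub\Sigmab\Ub^\top$ with singular values $\sigma_1 \ge \sigma_2 \ge \dots \ge \sigma_n \ge 0$ and columns $\u_1,\dots,\u_n$ of $\Ub$, we have $\norm{\Sigmab^{1/2}\Ub^\top\x}_2^2 = \sum_{i=1}^n \sigma_i (\u_i^\top\x)^2 = \x^\top\Ab\x$, and similarly $\norm{\Sigmab_{\ell}^{1/2}\Ub_{\ell}^\top\x}_2^2 = \sum_{i=1}^{\ell}\sigma_i(\u_i^\top\x)^2 = \x^\top\Ab_{\ell}\x$, where $\Ab_{\ell}$ is the best rank-$\ell$ approximation to $\Ab$. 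So the inequality is exactly $\x^\top(\Ab - \Ab_{\ell})\x \le \eps\trace(\Ab)$ for all unit $\x$, i.e., $\sigma_{\ell+1} = \norm{\Ab - \Ab_{\ell}}_2 \le \eps\trace(\Ab)$.

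**The key estimate.** The plan is therefore to bound the $(\ell+1)$-st singular value. Since $\ell = 1/\eps$ and the singular values are nonincreasing and nonnegative, $\trace(\Ab) = \sum_{i=1}^n \sigma_i \ge \sum_{i=1}^{\ell+1}\sigma_i \ge (\ell+1)\sigma_{\ell+1} \ge \ell\,\sigma_{\ell+1} = \sigma_{\ell+1}/\eps$. Rearranging gives $\sigma_{\ell+1} \le \eps\trace(\Ab)$, which is precisely what we need. (One minor point: if $\ell = 1/\eps$ is not an integer one takes $\ell = \lceil 1/\eps\rceil$ or $\lfloor 1/\eps\rfloor$; since $\lfloor 1/\eps\rfloor \ge 1/(2\eps)$ for $\eps \le 1$ this only costs a constant, and in any case the averaging argument only needs $\ell \ge 1/\eps$ in one direction.) I should also note $\Ab$ PSD is used so that all $\sigma_i \ge 0$ and the partial-sum/averaging step is valid, and so that $\trace(\Ab) = \sum_i\sigma_i$.

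**Assembling the proof.** Concretely, I would: (1) diagonalize $\Ab$ and rewrite both sides of the claimed inequality as quadratic forms $\x^\top\Ab\x$ and $\x^\top\Ab_{\ell}\x$, so the difference is $\x^\top(\Ab-\Ab_{\ell})\x$; (2) observe $0 \preceq \Ab - \Ab_{\ell}$ with $\norm{\Ab-\Ab_{\ell}}_2 = \sigma_{\ell+1}$, hence for any unit $\x$, $\x^\top(\Ab-\Ab_{\ell})\x \le \sigma_{\ell+1}$; (3) run the averaging argument above to get $\sigma_{\ell+1} \le \eps\trace(\Ab)$; (4) combine. There is essentially no obstacle here — the only thing to be slightly careful about is the rounding of $\ell = 1/\eps$ to an integer and making sure the direction of the averaging inequality ($\ell \ge 1/\eps$, so $\sigma_{\ell+1} \le \trace(\Ab)/\ell \le \eps\trace(\Ab)$) is the one actually used; the $1/\varepsilon^3$ sparsity bound in \algref{alg:sparse:pca} and \thmref{thm:pcpmain} anyway absorbs such constant factors. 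The real content of the lemma is just the pigeonhole observation that only $1/\eps$ singular values can each exceed an $\eps$ fraction of the trace.
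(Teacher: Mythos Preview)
Your proposal is correct and takes essentially the same approach as the paper: both arguments split $\Ab$ into its top-$\ell$ and tail parts, bound the tail contribution for a unit vector by $\sigma_{\ell+1}$, and then use the averaging inequality $\sigma_{\ell+1}\le \trace(\Ab)/\ell$ with $\ell=1/\eps$. The only cosmetic difference is that the paper writes the split via the Pythagorean identity on $\Ub\Sigmab^{1/2}\Ub^\top\x$ and then applies submultiplicativity, whereas you recast both sides directly as the quadratic forms $\x^\top\Ab\x$ and $\x^\top\Ab_\ell\x$; the underlying computation is identical.
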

%
At a high level, the proof of \lemref{lem:sz:pcp} first decomposes a basis for the columns spanned by $\Ub$ into those spanned by the top $\ell$ singular vectors and the remaining $n-\ell$ singular vectors. 
We then lower bound the contribution of the top $\ell$ singular vectors by upper bounding the contribution of the remaining $n-\ell$ singular vectors after noting that the largest remaining singular value is at most a $1/\ell$-fraction of the trace. 
For additional details, we defer the full proof to the supplementary material.

\paragraph{Using an approximate SVD solution.} 
The guarantees of \thmref{thm:pcpmain} in \algref{alg:sparse:pca} use an exact SVD computation, which could take time $\O{n^3}$. 
We can further improve the running time by using an approximate SVD algorithm such as the randomized block Krylov method of~\cite{MuscoM15}, which runs in nearly input sparsity runtime. 
Our analysis uses the relationships $\norm{\Sigmab_{\ell,\perp}^{1/2}}_2^2 \leq \nicefrac{\trace(\Ab)}{\ell}$ and $\sigma_1(\Sigma_{\ell})\le\trace(\Ab)$. 
The randomized block Krylov method of~\cite{MuscoM15} recovers these guarantees up to a multiplicative $(1+\eps)$ factor, in $\O{\nicefrac{\log n}{\eps^{1/2}}\cdot\nnz(\A)}$ time. 
Thus, by rescaling $\eps$, we recover the same guarantees of \thmref{thm:pcpmain} by using an approximate SVD in nearly input sparsity time. This results in a randomized algorithm; if one wants a deterministic algorithm then one should compute an exact SVD.   

\section{SPCA via SDP Relaxation and Thresholding}\seclab{sxn:SDPrelax}
To achieve higher accuracy, our second thresholding algorithm uses an approach that is based on the SDP relaxation of eqn.~(\ref{eqn:spdrelax}). Recall that solving eqn.~(\ref{eqn:spdrelax}) returns a PSD matrix $\Zb^* \in \mathbb{R}^{n \times n}$ that, by the definition of the semidefinite programming relaxation, satisfies
$\trace(\Ab\Zb^*) \geq \mathcal{Z}^*$, where $\mathcal{Z}^*$ is the true optimal solution of SPCA in eqn.~(\ref{eqn:spdrelax}). 

We would like to acquire a sparse vector $\zb$ from $\Z^*$. 
To that end, we first take $\Z_1$ to be the best rank-$1$ approximation to $\Z^*$. 
Note that $\Z_1$ can be quickly computed by taking the top eigenvector $\ub$ of $\Z^*$ and setting $\Z_1=\ub\ub^\top$. 
Consider a set $S$ defined to be the set of indices of the $9k^2\beta^2/\eps^2$ coordinates of $\ub$ with the largest absolute value, where $\beta$ is a parameter defined below (and close to $1$ in our experiments). 
Intuitively, the indices in $S$ should correlate with the ``important'' rows and columns of $\Ab$. 
Hence, we define $\zb\in\mathbb{R}^n$ to be the vector that matches the corresponding coordinates of $\ub$ for indices of $S$ and zero elsewhere, outside of $S$. 
As a result, $\zb$ will be a $9k^2\beta^2/\eps^2$-sparse vector.  

\begin{algorithm}[!htb]
\caption{\texttt{spca-sdp}: accurate thresholding SPCA via SDP}
\alglab{alg:sparse:sdp:det}
\begin{algorithmic}[1]
\Require{$\A\in\mathbb{R}^{n\times n}$, sparsity $k$, error parameter $\eps>0$.}
\Ensure{$\zb \in \mathbb{R}^n$ such that $\|\zb\|_2=1$ and $\|\zb\|_0 = 9k^2\beta^2/\eps^2$.}
\State{Let $\Z^*$ be the optimal solution to the relaxed SPCA problem of eqn.~(\ref{eqn:spdrelax});}
\State{Let $\Z_1=\ub \ub^\top$ be the best rank-$1$ approximation to $\Z^*$;}
\State{Let $\zb\in\mathbb{R}^n$ be the sparse vector containing the top $\nicefrac{9k^2 \beta^2}{\varepsilon^2}$ coordinates in magnitude of $\ub$;}
\State{\textbf{return} $\zb$;}
\end{algorithmic}
\end{algorithm}

Our main quality-of-approximation result for~\algref{alg:sparse:sdp:det} is \thmref{thm:spca:sdp:det}. 
For simplicity of presentation, we make the standard assumption that all rows and columns of $\Ab$ have been normalized to have unit norm; this assumption can be relaxed, e.g., see~\citep{FountoulakisKKD17}. 

\begin{theorem}
\thmlab{thm:spca:sdp:det}
Given a PSD matrix $\A\in\mathbb{R}^{n\times n}$, a sparsity parameter $k$, and an error tolerance $\eps>0$, let $\Z$ be an optimal solution to the relaxed SPCA problem of eqn.~(\ref{eqn:spdrelax}) and $\Zb_1$ be the best rank-one approximation to $\Zb$. 
Suppose $\alpha\ge1$ is a constant such that $\trace(\Ab\Zb)\le \alpha\trace(\Ab\Zb_1)$ and $\beta$ is a constant such that $\beta\ge\frac{\|\Zb_1\|_1}{\|\Zb\|_1}$. 
Then,~\algref{alg:sparse:sdp:det} outputs a vector $\zb \in \mathbb{R}^n$ that satisfies $\norm{\z}_0=\frac{9k^2\beta^2}{\varepsilon^2}$,  $\norm{\z}_2\le 1$, and
%
\[\z^\top\A\z\ge(1/\alpha)\mathcal{Z}^*-\eps.\]
%
%
%
\end{theorem}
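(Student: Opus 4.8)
The plan is to establish the stronger bound $\zb^\top\A\zb \ge \trace(\A\Zb_1) - \eps$ and then combine it with the hypothesis on $\alpha$ together with the fact that eqn.~(\ref{eqn:spdrelax}) is a relaxation of eqn.~(\ref{eqn:spca}). Writing $\Zb_1 = \ub\ub^\top$, note that $\|\ub\|_2^2 = \lambda_{\max}(\Zb) \le \trace(\Zb) \le 1$, so—since $\zb$ is just a coordinate restriction of $\ub$—we automatically get $\|\zb\|_2 \le \|\ub\|_2 \le 1$; moreover $\trace(\A\Zb_1) = \ub^\top\A\ub \ge \tfrac1\alpha \trace(\A\Zb) \ge \tfrac1\alpha\mathcal{Z}^*$. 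It therefore remains only to control the loss incurred when $\ub$ is truncated to its top $s := 9k^2\beta^2/\eps^2$ coordinates in magnitude.

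First I would record two elementary estimates. From $\|\ub\|_1^2 = \|\ub\ub^\top\|_1 = \|\Zb_1\|_1 \le \beta\|\Zb\|_1 \le \beta k$ (the last step using the SDP constraint $\sum_{i,j}|Z_{i,j}|\le k$) we obtain $\|\ub\|_1 \le \sqrt{\beta k}$, hence also $\|\zb\|_1 \le \sqrt{\beta k}$. Letting $\wb := \ub - \zb$ denote the discarded tail, every coordinate of $\wb$ has magnitude at most $\|\ub\|_1/s$ (the $s$ largest coordinates of $\ub$ already have total absolute value at most $\|\ub\|_1$), so $\|\wb\|_2^2 \le \tfrac{\|\ub\|_1}{s}\|\wb\|_1 \le \tfrac{\|\ub\|_1^2}{s} \le \tfrac{\beta k}{s} = \tfrac{\eps^2}{9\beta k}$, i.e.\ $\|\wb\|_2 \le \tfrac{\eps}{3\sqrt{\beta k}}$.

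Now expand $\ub^\top\A\ub = \zb^\top\A\zb + 2\zb^\top\A\wb + \wb^\top\A\wb$, so the goal becomes showing $|2\zb^\top\A\wb + \wb^\top\A\wb| \le \eps$. The step I expect to be the crux is the following use of the normalization of $\A$: since each row $\A_{i*}$ has unit Euclidean norm, Cauchy--Schwarz gives $\|\A\wb\|_\infty = \max_i|\langle\A_{i*},\wb\rangle| \le \|\wb\|_2$. Pairing this $\ell_\infty$ bound with the $\ell_1$ bounds above via H\"older's inequality yields $|\zb^\top\A\wb| \le \|\zb\|_1\|\A\wb\|_\infty \le \sqrt{\beta k}\cdot\tfrac{\eps}{3\sqrt{\beta k}} = \tfrac\eps3$ and, similarly, $\wb^\top\A\wb \le \|\wb\|_1\|\A\wb\|_\infty \le \tfrac\eps3$. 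Hence $\zb^\top\A\zb \ge \ub^\top\A\ub - \eps = \trace(\A\Zb_1) - \eps \ge \tfrac1\alpha\mathcal{Z}^* - \eps$, and together with $\|\zb\|_2\le 1$ and $\|\zb\|_0 \le s$ this is the theorem.

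The proof has essentially one idea behind it: neither the $\ell_1$ control on $\ub$ (coming from the SDP's $\ell_1$ constraint through $\beta$) nor the $\ell_2$ smallness of the tail $\wb$ (coming from keeping $\Theta(k^2\beta^2/\eps^2)$ coordinates) is on its own enough to make the two cross terms $o(1)$; it is precisely the unit-row-norm assumption on $\A$, which trades the $\ell_2$ bound on $\wb$ for an $\ell_\infty$ bound on $\A\wb$, that allows the two to be multiplied together. This also pins down the choice of sparsity $9k^2\beta^2/\eps^2$: it is exactly the value making both $\|\zb\|_1\|\wb\|_2$ and $\|\wb\|_1\|\wb\|_2$ at most $\eps/3$, so that the three error contributions sum to at most $\eps$. (If one uses an approximate SDP solution or an approximate top eigenvector rather than exact ones, the extra error can be absorbed by rescaling $\eps$.)
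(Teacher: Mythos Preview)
Your proof is correct and follows essentially the same route as the paper: derive $\|\ub\|_1\le\sqrt{\beta k}$ from the SDP $\ell_1$ constraint, expand $\ub^\top\A\ub-\zb^\top\A\zb$ into cross terms, and control each via H\"older together with the unit-row-norm observation $\|\A\wb\|_\infty\le\|\wb\|_2$. Your bound $\|\wb\|_2^2\le\|\wb\|_\infty\|\wb\|_1\le\|\ub\|_1^2/s$ is in fact a slightly cleaner justification of the same inequality the paper states.
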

\begin{proof}
If $\Zb_1 = \ub\ub^\top$, then $\ub^\top\Ab \ub = \trace(\Ab\Zb_1) \ge (1/\alpha) \trace(\Ab\Zb) \ge (1/\alpha)\mathcal{Z}^*$.

Since all eigenvalues of $\Zb$ are at most one, we have that $\|\ub\ub^\top\|_F^2 \le 1$, which means $\|\ub\|_2 \le 1$.

Also, letting $\beta \ge \frac{\|\Zb_1\|_1}{\|\Zb\|_1}$, we have $\|\Zb_1\|_1 \le \beta \|\Zb\|_1 \le \beta k$. So  $\|\Zb_1\|_1 = \sum_{i,j} \abs{u_i u_j} = \|\ub\|_1^2$, so $\|\ub\|_1 \le \sqrt{\beta k}$.

Let $\zb$ be the vector of top $\frac{9k^2 \beta^2}{\varepsilon^2}$ coordinates in absolute value of $\ub$, and remaining entries equal to $0$. 
Then by H\"{o}lder's inequality,
\begin{flalign*}
\zb^\top \Ab\zb =~& \ub^\top \Ab\ub - (\ub-\zb)^\top \Ab\zb - \zb^\top \Ab (\ub-\zb) - (\ub-\zb)^\top \Ab (\ub-\zb)\nonumber\\
=~& \ub^\top \Ab\ub - 2\zb^\top \Ab (\ub-\zb) - (\ub-\zb)^\top \Ab (\ub-\zb)^\top\nonumber\\
\ge~& \ub^\top \Ab\ub - 2 \|\zb\|_1 \|\Ab(\ub-\zb)\|_\infty - \|\ub-\zb\|_1 \|\Ab(\ub-\zb)\|_\infty.
\end{flalign*}
Each row of $\Ab$ has squared Euclidean norm at most $1$, so that $\|\ub-\zb\|_2\ge\|\Ab(\ub-\zb)\|_\infty$. 
Since $\|\ub\|_1 \le \sqrt{\beta k}$, then
\[
\zb^\top \Ab\zb\ge\ub^\top \Ab\ub - 2 \sqrt{\beta k} \|\ub-\zb\|_2 - \sqrt{\beta k} \|\ub-\zb\|_2= \ub^\top \Ab\ub - 3 \sqrt{\beta k} \|\ub-\zb\|_2.\]


Because $\|\ub\|_1 \le \sqrt{\beta k}$ and $\zb$ contains the top $\frac{9k^2 \beta^2}{\varepsilon^2}$ coordinates in absolute value of $\ub$, then all entries of $(\ub-\zb)$ have magnitude at most $\frac{\varepsilon^2}{9(k \beta)^{3/2}}$. 
Hence, we have
\[\|\ub-\zb\|_2^2 \le \frac{\varepsilon^4}{81(k\beta)^3} \cdot\frac{9(k\beta)^2}{\varepsilon^2}=\frac{\varepsilon^2}{9k \beta},
\]
and therefore, $\|\ub-\zb\|_2 \le \frac{\varepsilon}{3\sqrt{k \beta}}$. 
Hence, 
$$\zb^\top \Ab\zb \ge \ub^\top \Ab\ub - \varepsilon \ge (1/\alpha) \mathcal{Z}^* - \varepsilon.$$
Note $\zb$ has $\frac{9k^2 \beta^2}{\varepsilon^2}$ non-zero entries, and $\|\zb\|_2 \le \|\ub\|_2\le 1$.
\end{proof}

\paragraph{Interpretation of our guarantee.} Our assumptions in \thmref{thm:spca:sdp:det} simply say that much of the trace of the matrix $\Ab\Zb$ should be captured by the trace of $\Ab\Zb_1$, as quantified by the constant $\alpha$. For example, if $\Zb$ were a rank-one matrix, then the assumption would hold with $\alpha=1$. As the trace of $\Ab\Zb_1$ fails to approximate the trace of $\Ab\Zb$ (which intuitively implies that the SDP relaxation of eqn.~\eqref{eqn:spdrelax} did not sufficiently capture the original problem), the constant $\alpha$ increases and the quality of the approximation decreases. In our experiments, we indeed observed that $\alpha$ is close to one (see Table~\ref{tab:constants}). Similarly, we empirically observe that $\beta$, which is the ratio between the 1-norm of $\Zb$ and its \emph{rank-one} approximation is also close to one for our datasets.

\paragraph{Using an approximate SDP solution.} 
The guarantees of \thmref{thm:spca:sdp:det} in \algref{alg:sparse:sdp:det} use an optimal solution $\Zb$ to the SDP relaxation in eqn.~(\ref{eqn:spdrelax}). 
In practice, we will only obtain an approximate solution $\tilde{\Zb}$ to eqn.~(\ref{eqn:spdrelax}) using any standard SDP solver, e.g.~\citep{Ali1995}, such that $\trace(\Ab\tilde{\Zb})\ge\trace(\Ab\Zb)-\varepsilon$ after $\O{\log\nicefrac{1}{\varepsilon}}$ iterations. 
Since our analysis only uses the relationship $\ub^\top\A\ub\ge\trace(\A\Z)$, then the additive $\varepsilon$ guarantee can be absorbed into the other $\varepsilon$ factors in the guarantees of \thmref{thm:spca:sdp:det}. 
Thus, we recover the same guarantees of \thmref{thm:spca:sdp:det} by using an approximate solution to the SDP relaxation in eqn.~(\ref{eqn:spdrelax}). 


\section{Experiments}
\seclab{sec:exp}

We compare the output of our algorithms against state-of-the-art SPCA approaches, including the coordinate-wise optimization algorithm of~\cite{beck2016sparse}~(\texttt{cwpca}) the block decomposition algorithm of~\cite{Yuan_2019_CVPR}~(\texttt{dec}), and the spannogram-based algorithm of~\cite{PDK2013}~(\texttt{spca-lowrank}). 
For the implementation of \texttt{dec}, we used the \emph{coordinate descent method} 
and for \texttt{cwpca} we used the \emph{greedy coordinate-wise} (GCW) method. We implemented \texttt{spca-lowrank} with the low-rank parameter $d$ set to three; finally, for \texttt{spca-svd}, we fixed the threshold parameter $\ell$ to one.

\begin{figure*}[!htb]
\begin{subfigure}{0.33\textwidth}
  \centering
  \includegraphics[width=1\linewidth]{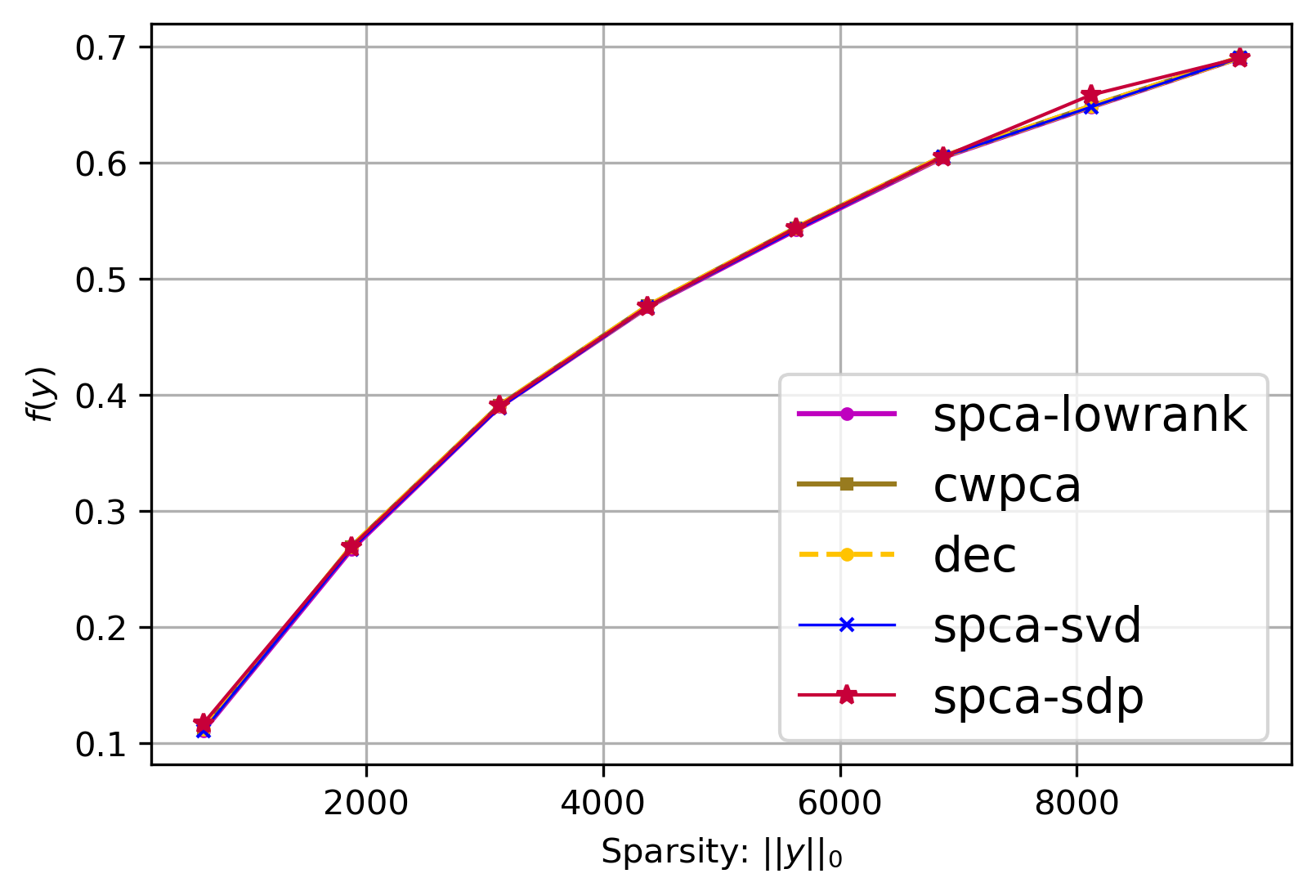}
  \caption{\textsc{Chr}~1, $n=37,493$}
  \label{fig:sfig1}
\end{subfigure}%
\begin{subfigure}{0.33\textwidth}
  \centering
  \includegraphics[width=1\linewidth]{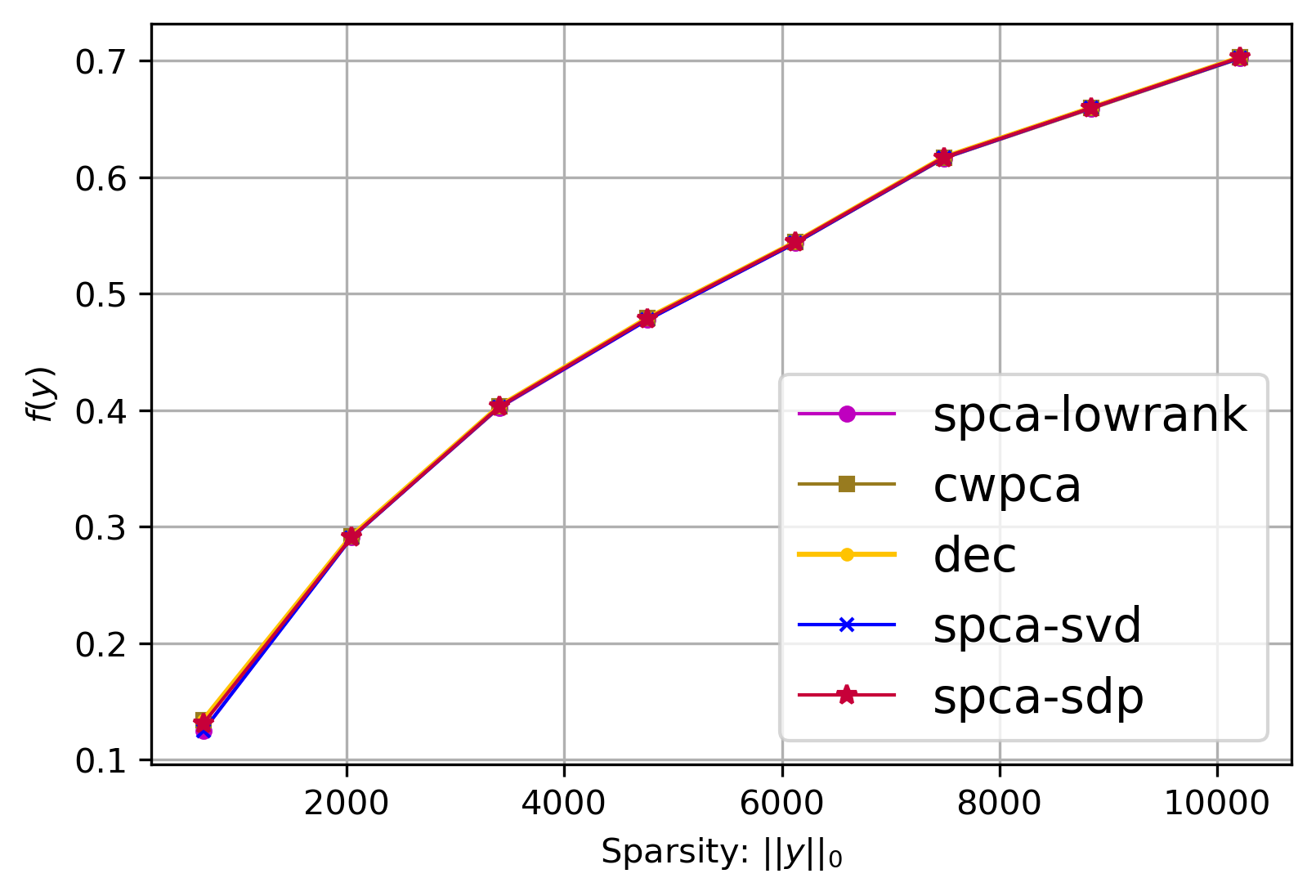}
  \caption{\textsc{Chr}~2, $n=40,844$}
  \label{fig:sfig2}
\end{subfigure}
\begin{subfigure}{0.33\textwidth}
\centering
    \includegraphics[width=1\linewidth]{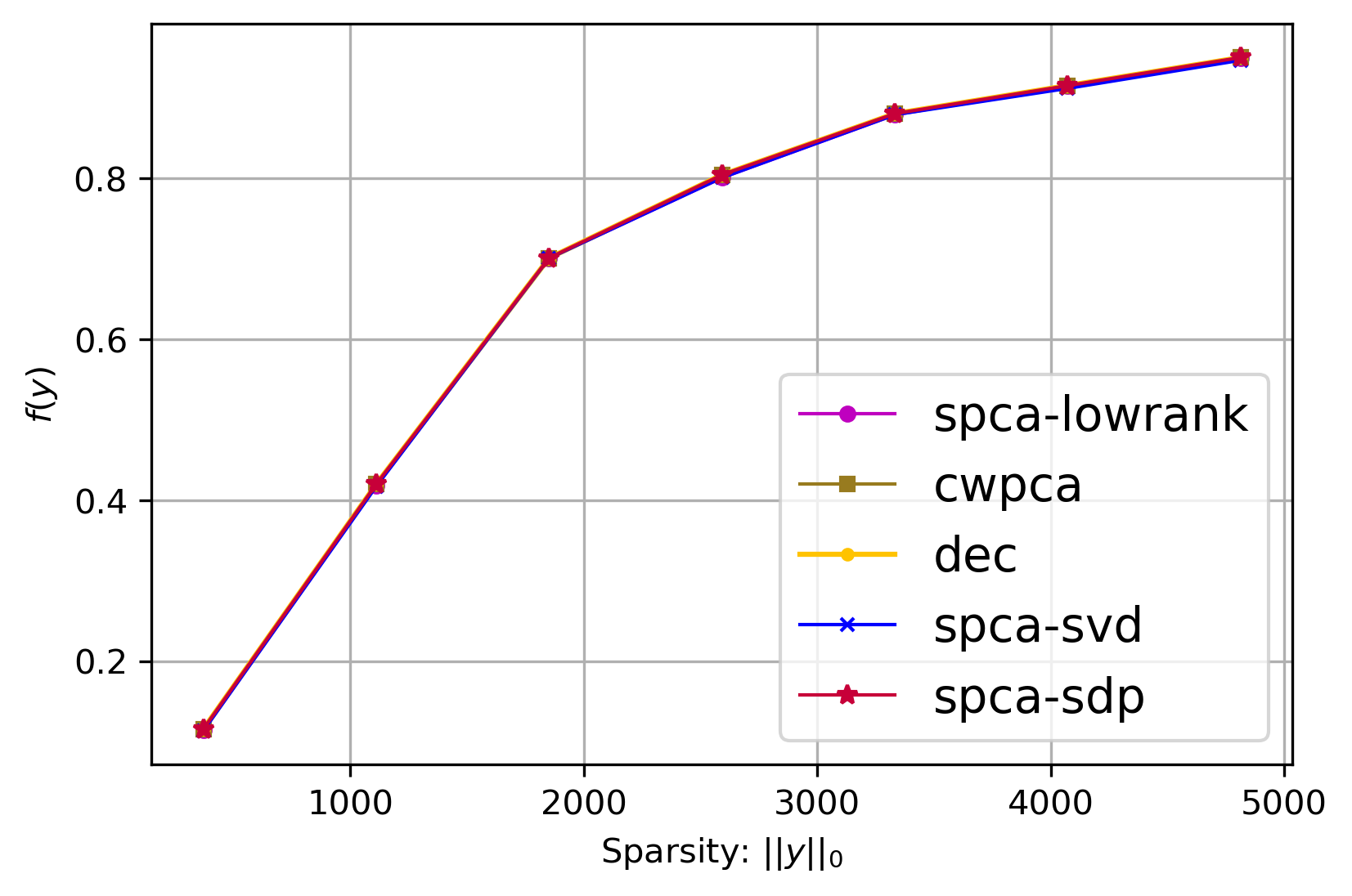}
  \caption{Gene expression data}
  \label{fig:sfig33}
\end{subfigure}

\caption{Experimental results on real data: $f(\y)$ vs. sparsity.}
\label{fig:fig}
\end{figure*}

In order to explore the sparsity patterns of the outputs, we first applied our methods on the \emph{pit props} dataset, which was introduced in~\citep{jeffers1967two} and is a toy, benchmark example used to test sparse PCA. It is a $13\times13$ correlation matrix, originally calculated from $180$ observations with $13$ explanatory variables. We applied our algorithms to the \emph{Pit Props} matrix in order to extract a sparse top principal component, having a sparsity pattern similar to that of 
\texttt{cwpca}, \texttt{dec}, and \texttt{spca-lowrank}. It is actually known that the decomposition method of~\cite{Yuan_2019_CVPR} can find the global optimum for this dataset. 
We set the sparsity parameter $k$ to seven; Table~\ref{tab:multicol1} in Appendix~\ref{app:expt} shows that both \texttt{spca-svd} and \texttt{spca-sdp} are able to capture the right sparsity pattern. In terms of the optimal value $\mathcal{Z}^*$, \texttt{spca-svd} performs very similar to \texttt{spca-lowrank}, while our SDP-based algorithm
\texttt{spca-sdp} exactly recovers the optimal solution and matches both \texttt{dec} and \texttt{cwpca}. 

\begin{figure}
    \centering

\begin{minipage}{0.32\linewidth}
  \includegraphics[height=1.4in,width=1.90in]{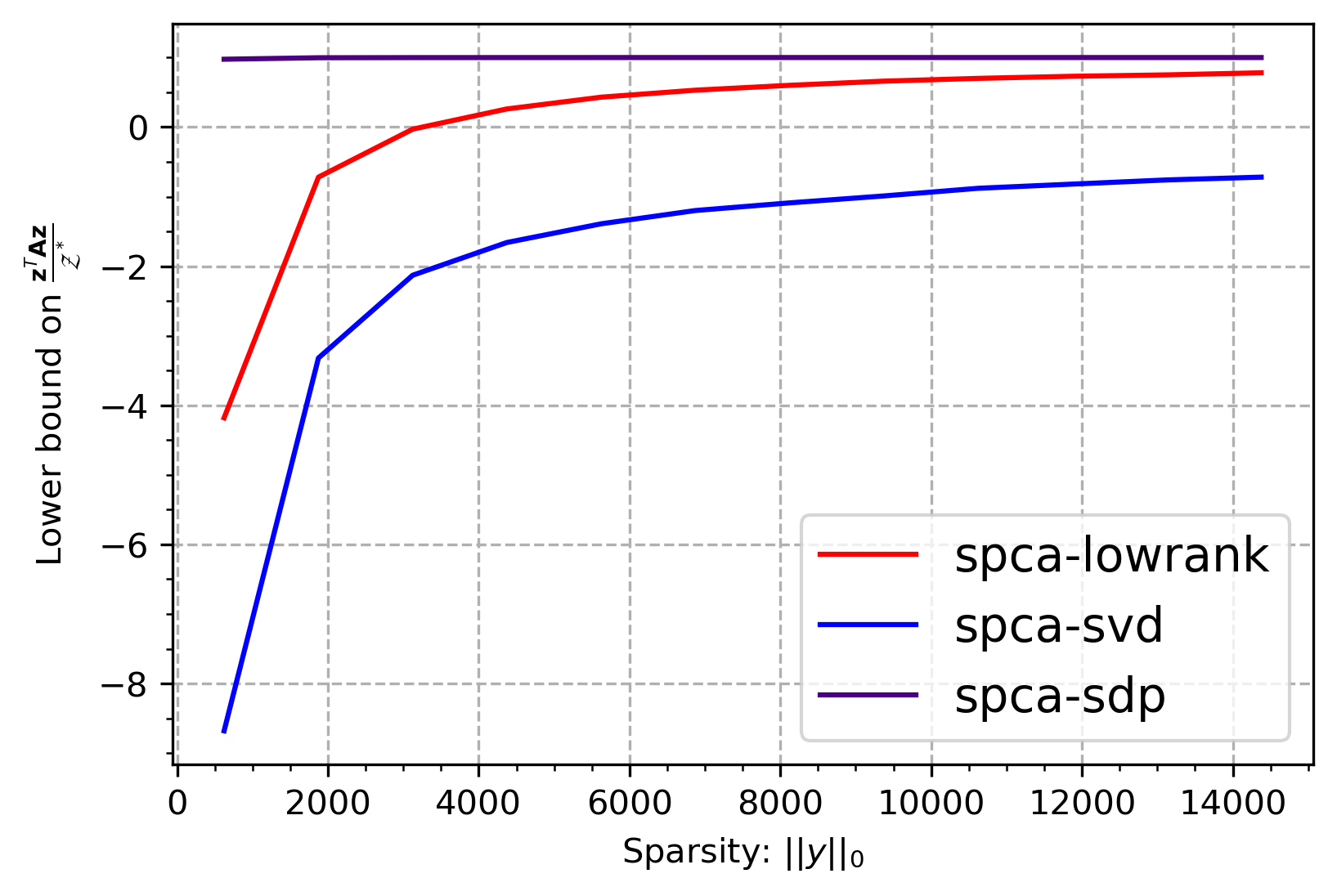}

\end{minipage}
\hspace{0.8mm}
\begin{minipage}{0.32\linewidth}
  \includegraphics[height=1.4in,width=1.90in]{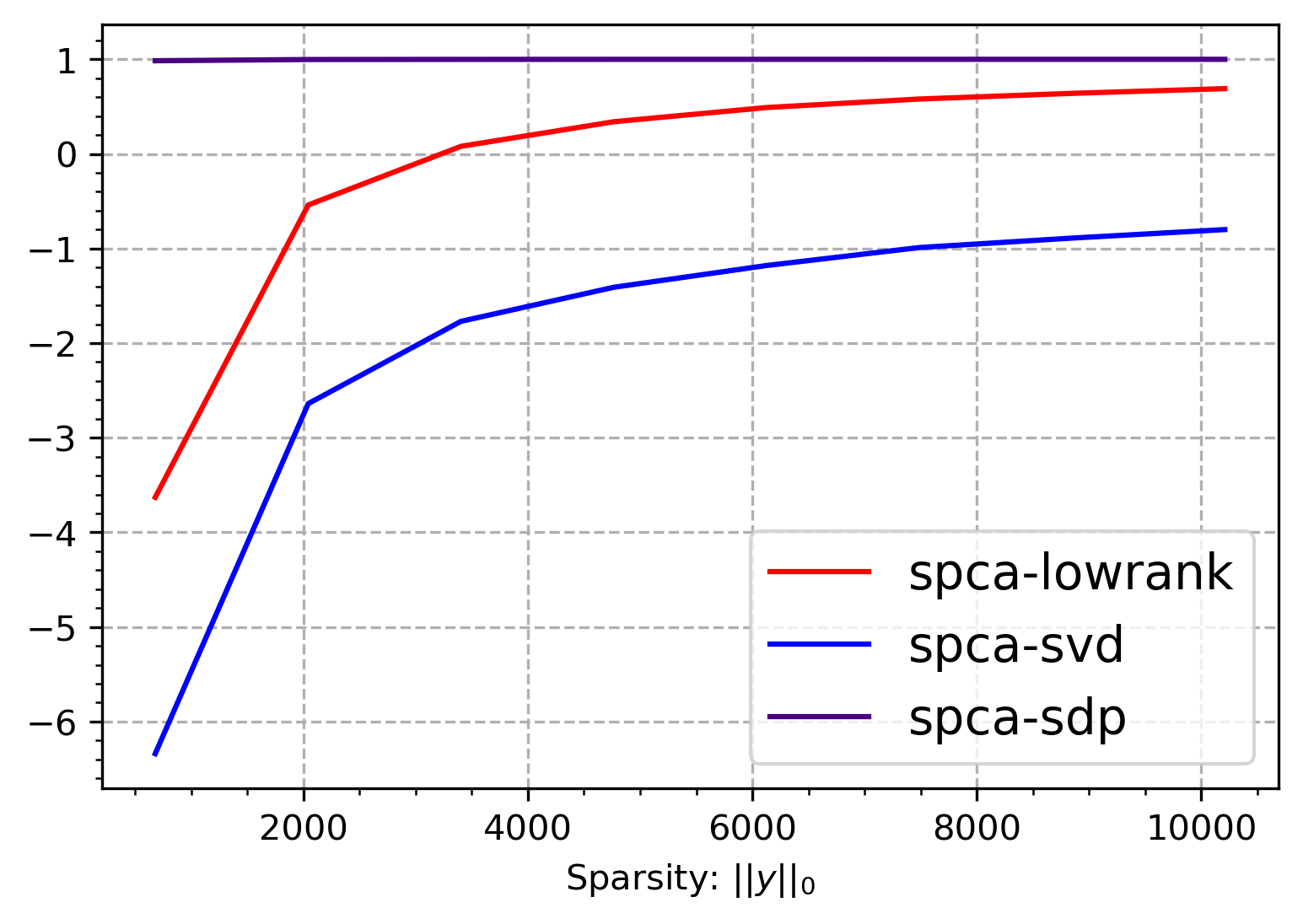}
\end{minipage}
\hspace{1.2mm}
\begin{minipage}{0.32\linewidth}
  \includegraphics[height=1.4in,width=1.90in]{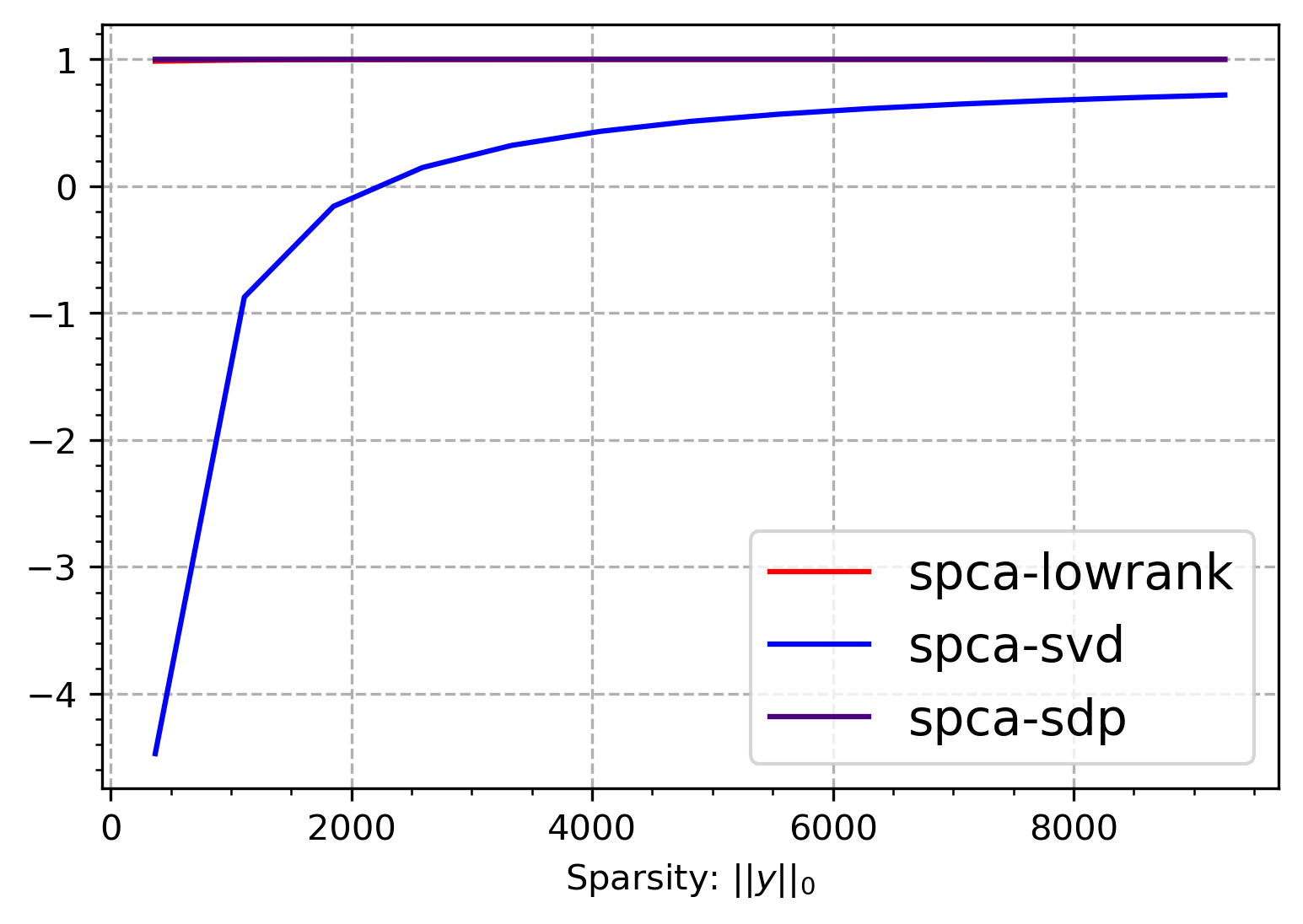}
\end{minipage}

    \caption{Tightness of our bounds: HGDP/HAPMAP chromosome 1 (left), HGDP/HAPMAP chromosome 2 (middle), and gene expression (right) data}
    \label{fig:my_label}
\end{figure}

Next, we further demonstrate the empirical performance of our algorithms on larger real-world datasets, as well as on synthetic datasets, similar to~\citep{FountoulakisKKD17}\,(see Appendix~\ref{app:expt}). We use genotypic data from the Human Genome Diversity Panel (HGDP)~\citep{int07} and the International Haplotype Map (HAPMAP) project~\citep{li2008}, forming 22 matrices, one for each chromosome, encoding all autosomal genotypes. 
Each matrix contains 2,240 rows and a varying number of columns (typically in the tens of thousands) that is equal to the number of single nucleotide polymorphisms (SNPs, well-known biallelic loci of genetic variation across the human genome) in the respective chromosome. Finally, we also use a lung cancer gene expression dataset ($107\times 22,215$ matrix) from~\citep{landi2008gene}.

We compare our algorithms \texttt{spca-svd} and~\texttt{spca-sdp} with the solutions returned by \texttt{dec}, \texttt{cwpca}, and \texttt{spca-lowrank}. Let $f(\y)=\nicefrac{\y^\top\Ab\y}{\|\Ab\|_2}$; then, $f(\y)$ measures the quality of an approximate solution $\y\in\mathbb{R}^n$ to the SPCA problem.  Essentially, $f(\yb)$ quantifies the ratio of the explained variance coming from the first sparse PC to the explained variance of the first \textit{sparse} eigenvector. Note that $0\le f(\y)\le 1$ for all $\y$ with $\|\y\|_2 \le 1$. As $f(\y)$ gets closer to one, the vector $\y$ captures more of the variance of the matrix $\Ab$ that corresponds to its top singular value and corresponding singular vector.

In our experiments, for \texttt{spca-svd} and \texttt{spca-sdp}, we fix the sparsity $s$ to be equal to $k$, so that all algorithms return a sparse vector with the same number of non-zero elements.  In Figures~\ref{fig:sfig1}-\ref{fig:sfig33} we evaluate the performance of the different SPCA algorithms by plotting $f(\y)$ against $\|\y\|_0$, i.e., the sparsity of the output vector, on data from HGDP/HAPMAP chromosome 1, HGDP/HAPMAP chromosome 2, and the gene expression data. 
Note that in terms of accuracy both \texttt{spca-svd}  and \texttt{spca-sdp}, essentially match the current state-of-the-art \texttt{dec}, \texttt{cwpca}, and \texttt{spca-lowrank}.

We now discuss the running time of the various approaches. All experiments were performed on a server with two Rome 32 core, $2.0$GHz CPUs and $.8$ TBs of RAM. Our \texttt{spca-svd} is the fastest approach, taking about 2.5 to three hours for each sparsity value for the largest datasets (HGDP/HAPMAP chromosomes~1 and~2 data). The state-of-the-art methods \texttt{spca-lowrank}, \texttt{cwpca}, and \texttt{dec} take five to seven hours for the same dataset and thus are at least two times slower. However, our second approach \texttt{spca-spd} is slower and takes approximately 20 hours, as it needs to solve a large-scale SDP problem. 

\textbf{Comparing the theoretical guarantees of our approaches and current-state-of-the-art.} A straightforward theoretical comparison between the theoretical guarantees of our methods and current state-of-the-art if quite tricky, since they depend on different parameters that are not immediately comparable. Therefore, we attempt to compare the \textit{tightness} of the theoretical guarantees in the context of the datasets used in our experiments, where these parameters can be directly evaluated. We chose to compare our theoretical bounds with the method of~\citep{PDK2013}, namely \texttt{spca-lowrank}, which works well in practice, has reasonable running times, and comes with state-of-the-art provable accuracy guarantees. We used the following datasets in our comparison: the data from chromosome 1 and chromosome 2 of HGDP/HapMap, and the gene expression data of~\cite{landi2008gene}. We set the accuracy parameter $\epsilon$ in~\thmref{thm:pcpmain} and~\thmref{thm:spca:sdp:det} to .9 (using different values of $\epsilon$ between zero and one does not change the findings of Figure~\ref{fig:my_label}). Figure~\ref{fig:my_label} summarizes our findings and highlights an important observation on the theory-practice gap: notice that for small values of the sparsity parameter, both \texttt{spca-lowrank} and \texttt{spca-svd} predict \textit{negative} values for the accuracy ratio, which are, of course, meaningless. The theoretical bounds of \texttt{spca-lowrank} become meaningful (e.g., non-negative) when the sparsity parameter exceeds 3.5K for the HAPMAP/HGDP chromosome 1 data, while the theoretical bounds of \texttt{spca-svd} remain consistently meaningless, despite its solid performance in practice. However, the theoretical bounds of our \texttt{spca-sdp} algorithm are consistently the best and very close to one, thus solidly predicting its high accuracy in practice. Similar findings are shown in the other panels of Figure~\ref{fig:my_label} for the other datasets (see Appendix~\ref{app:expt} for additional experiments), with the notable exception of the gene expression dataset, whose underlying eigenvalues nearly follow a power-law decay, and, as a result, \texttt{spca-lowrank} exhibits a much tighter bound that almost matches \texttt{spca-sdp}.
We believe that the improved theoretical performance of \texttt{spca-sdp} is due to the novel dependency of our approach (see \thmref{thm:spca:sdp:det}) on the constants $\alpha$ and $\beta$, that are both close to one in real datasets (see Table~\ref{tab:constants}). On the other hand, we note that the approximation guarantee of \texttt{spca-lowrank} typically depends on the the spectrum of $\Ab$ and the maximum diagonal entry of $\Ab$, which are less well-behaved quantities. 


\begin{table*}[ht]
\begin{center}
 \begin{minipage}{.47\textwidth}
 
		\resizebox{\textwidth}{!}{\begin{tabular}{|c| c c c|}
				\hline
				Sparsity& $\alpha$ & $\beta$ & $\epsilon_d$ \\
				\hline
				$624$& $1.0000745$ & $0.9999961$ & $4.63$\\
				$1,875$& $1.0000306$ & $1.0000028$ & $1.54$\\
				$3,125$& $1.0000066$ & $1.0000020$ & $0.92$\\
				$4,372$& $0.9999767$ & $0.9999901$ & $0.66$\\
				$5,628$& $1.0000056$ & $0.9999995$ & $0.51$\\
				$6,873$& $0.9999493$ & $1.0000012$ & $0.42$\\
				$8,122$& $1.0000025$ & $1.0000263$ & $0.36$\\
				$9,377$& $1.0000398$ & $0.9999828$ & $0.31$\\
				\hline
			\end{tabular}}
    \end{minipage}%
\hspace{4mm}
 \begin{minipage}{.47\textwidth}
 
		\resizebox{\textwidth}{!}{\begin{tabular}{|c| c c c|}
				\hline
				Sparsity& $\alpha$ & $\beta$ & $\epsilon_d$ \\
				\hline
				$680$& $1.0000611$ & $0.9999994$ & $5.18$\\
				$2,043$& $1.0000374$ & $0.9999985$ & $1.72$\\
				$3,404$& $1.0000435$ & $0.9999939$ & $1.03$\\
				$4,765$& $1.0000704$ & $1.0000090$ & $0.74$\\
				$6,126$& $1.0000432$ & $0.9999896$ & $0.57$\\
				$7,486$& $0.9999997$ & $1.0000215$ & $0.47$\\
				$8,848$& $1.0000726$ & $1.0000685$ & $0.40$\\
				$10,209$& $0.9999913$ & $1.0000321$ & $0.34$\\
				\hline
			\end{tabular}}
    \end{minipage}
	\caption{\small $\alpha$, $\beta$ (see \thmref{thm:spca:sdp:det}) and $\epsilon_d$ (the parameter of interest for~\cite{PDK2013}) for HGDP/HAPMAP chromosome~1 (left) and chromosome~2 (right) data.}
	\label{tab:constants}
\end{center}
\end{table*}


\section{Conclusion, limitations, and future work}
We present thresholding as a simple and intuitive approximation algorithm for SPCA, without imposing restrictive assumptions on the input covariance matrix. Our first algorithm provides runtime-vs-accuracy trade-offs and can be implemented in nearly input sparsity time; our second algorithm needs to solve an SDP and provides highly accurate solutions with novel theoretical guarantees. Our algorithms immediately extend to sparse kernel PCA. Our work does have limitations which are interesting topics for future work. First, is it possible to improve the accuracy guarantees of our SVD-based thresholding scheme to match its superior practical performance. Second, can we speed up our SDP-based thresholding scheme \textit{in practice} by using early termination of the SDP solvers or by using warm starts? Third, can we extend our approaches to handle more than one sparse singular vectors, by deflation or other strategies? Finally, it would be interesting to explore whether the proposed algorithms can approximately recover the \textit{support} of the vector $\xb^*$ (see eqn.~(\ref{eqn:spca})) instead of the optimal value $\mathcal{Z}^*$.

\section*{Broader Impacts and Limitations}
Our work is focused on speeding up and improving the accuracy of algorithms for SPCA. As such, it could have significant broader impacts by allowing users to more accurately solve SPCA problems like the ones discussed in our introduction. While applications of our work to real data could result in ethical considerations, this is an indirect (and unpredictable) side-effect of our work. Our experimental work uses publicly available datasets to evaluate the performance of our algorithms; no ethical considerations are raised.


\newpage

\bibliographystyle{plainnat}
\bibliography{PD-bib,references}

\clearpage
\appendix
\section{SPCA via thresholding: Proofs}\seclab{sxn:threshold:proofs}

We will use the notation of~\secref{sec:pcp}. 
For notational convenience, let $\sigma_1,\ldots,\sigma_n$ be the diagonal entries of the matrix $\Sigmab \in \mathbb{R}^{n \times n}$, i.e., the singular values of $\Ab$.

\lemszpcp*
\begin{proof}
Let $\Ub_{\ell,\perp} \in \mathbb{R}^{n \times (n-\ell)}$ be a matrix whose columns form a basis for the subspace perpendicular to the subspace spanned by the columns of $\Ub_{\ell}$. Similarly, let $\Sigmab_{\ell,\perp} \in \mathbb{R}^{(n-\ell) \times (n-\ell)}$ be the diagonal matrix of the bottom $n-\ell$ singular values of $\Ab$. Notice that $\Ub = [\Ub_{\ell}\ \ \Ub_{\ell,\perp}]$ and $\Sigmab = [\Sigmab_{\ell}\ \ \zero;\ \zero\ \ \Sigmab_{\ell,\perp}]$; thus,
\begin{flalign*}
\Ub\Sigmab^{1/2}\Ub^{\top} = \Ub_{\ell}\Sigmab_{\ell}^{1/2}\Ub_{\ell}^{\top} +
\Ub_{\ell,\perp}\Sigmab_{\ell,\perp}^{1/2}\Ub_{\ell,\perp}^{\top}.
\end{flalign*}
By the Pythagorean theorem,
\begin{align*}
\norm{\Ub\Sigmab^{1/2}\Ub^{\top}\x}_2^2=
\norm{\Ub_{\ell}\Sigmab_{\ell}^{1/2}\Ub_{\ell}^{\top}\x}_2^2 +
\norm{\Ub_{\ell,\perp}\Sigmab_{\ell,\perp}^{1/2}\Ub_{\ell,\perp}^{\top}\x}_2^2.
\end{align*}
Using invariance properties of the vector two-norm and sub-multiplicativity, we get
\begin{align*}
\norm{\Sigmab_{\ell}^{1/2}\Ub_{\ell}^{\top}\x}_2^2 \geq
\norm{\Sigmab^{1/2}\Ub^{\top}\x}_2^2-
\norm{\Sigmab_{\ell,\perp}^{1/2}}_2^2\norm{\Ub_{\ell,\perp}^{\top}\x}_2^2.
\end{align*}
We conclude the proof by noting that $\norm{\Sigmab^{1/2}\Ub^{\top}\x}_2^2 = \x^\top \Ub\Sigmab\Ub^{\top}\x = \x^\top \Ab \x$ and
$$\norm{\Sigmab_{\ell,\perp}^{1/2}}_2^2 = \sigma_{\ell+1} \leq \frac{1}{\ell}\sum_{i=1}^n \sigma_i = \frac{\trace(\Ab)}{\ell}.$$
The inequality above follows since $\sigma_1\geq \sigma_2\geq \ldots \sigma_\ell \geq \sigma_{\ell+1}\geq \ldots \geq \sigma_n$. We conclude the proof by setting $\ell = 1/\varepsilon$.
\end{proof}

\thmpcpmain*
\begin{proof}
Let $R = \{i_1,\ldots,i_{|R|}\}$ be the set of indices of rows of $\Ub_{\ell}$ (columns of $\Ub_{\ell}^\top$) that have squared norm at least $\varepsilon^2/k$ and let $\bar{R}$ be its complement. Here $|R|$ denotes the cardinality of the set $R$ and $R \cup \bar{R} = \{1,\ldots,n\}$. Let $\Rb \in \mathbb{R}^{n \times |R|}$ be the sampling matrix that selects the columns of $\Ub_{\ell}$ whose indices are in the set $R$ and let
$\Rb_{\perp} \in \mathbb{R}^{n \times (n-|R|)}$ be the sampling matrix that selects the columns of $\Ub_{\ell}$ whose indices are in the set $\bar{R}$.
Thus, each column of $\Rb$ (respectively $\Rb_{\perp}$) has a single non-zero entry, equal to one, corresponding to one of the $|R|$ (respectively $|\bar{R}|$) selected columns. Formally, $\Rb_{i_t,t}=1$ for all $t=1,\ldots, |R|$, while all other entries of $\Rb$ (respectively $\Rb_{\perp}$) are set to zero; $\Rb_{\perp}$ can be defined analogously.
The following properties are easy to prove: $\Rb\Rb^{\top}+\Rb_{\perp}\Rb_{\perp}^{\top} = \Ib_n$; $\Rb^{\top}\Rb=\Ib$; $\Rb_{\perp}^{\top}\Rb_{\perp}=\Ib$;
$\Rb_{\perp}^{\top}\Rb=\zero$.
Recall that $\xb^*$ is the optimal solution to the SPCA problem from eqn.~(\ref{eqn:spca}). We proceed as follows:

\begin{flalign}
\norm{\Sigmab_{\ell}^{1/2}\U_{\ell}^\top\x^*}_2^2&=\norm{\Sigmab_{\ell}^{1/2}\U_{\ell}^\top(\Rb\Rb^{\top}+\Rb_{\perp}\Rb_{\perp}^{\top})\x^*}_2^2\nonumber\\
&\leq \norm{\Sigmab_{\ell}^{1/2}\U_{\ell}^\top\Rb\Rb^{\top}\x^*}_2^2+\norm{\Sigmab_{\ell}^{1/2}\U_{\ell}^\top\Rb_{\perp}\Rb_{\perp}^{\top}\x^*}_2^2\nonumber\\
&+2\norm{\Sigmab_{\ell}^{1/2}\U_{\ell}^\top\Rb\Rb^{\top}\x^*}_2\norm{\Sigmab_{\ell}^{1/2}\U_{\ell}^\top\Rb_{\perp}\Rb_{\perp}^{\top}\x^*}_2\nonumber\\
&\leq \norm{\Sigmab_{\ell}^{1/2}\U_{\ell}^\top\Rb\Rb^{\top}\x^*}_2^2+\sigma_1\norm{\U_{\ell}^\top\Rb_{\perp}\Rb_{\perp}^{\top}\x^*}_2^2\nonumber\\
&+2\sigma_1\norm{\U_{\ell}^\top\Rb\Rb^{\top}\x^*}_2\norm{\U_{\ell}^\top\Rb_{\perp}\Rb_{\perp}^{\top}\x^*}_2\label{eqn:sdp:svd:1}.
\end{flalign}
The above inequalities follow from the Pythagorean theorem and sub-multiplicativity. 
We now bound the second term in the right-hand side of the above inequality.
\begin{flalign}
  \norm{\U_{\ell}^\top\Rb_{\perp}\Rb_{\perp}^{\top}\x^*}_2 &=  \|\sum_{i=1}^n (\U_{\ell}^\top\Rb_{\perp})_{*i}(\Rb_{\perp}^{\top}\x^*)_i\|_2\nonumber\\
  &\leq \sum_{i=1}^n \|(\U_{\ell}^\top\Rb_{\perp})_{*i}\|_2 \cdot |(\Rb_{\perp}^{\top}\x^*)_i|\leq \sqrt{\frac{\varepsilon^2}{k}}\sum_{i=1}^n |(\Rb_{\perp}^{\top}\x^*)_i|\nonumber\\
  &\leq \sqrt{\frac{\varepsilon^2}{k}}\|\Rb_{\perp}^{\top}\x^*\|_1\leq \sqrt{\frac{\varepsilon}{k}}\sqrt{k} = \varepsilon.\label{eqn:sdp:svd:2}
\end{flalign}
In the above derivations we use standard properties of norms and the fact that the columns of $\U_{\ell}^\top$ that have indices in the set $\bar{R}$ have squared norm at most $\varepsilon^2/k$. The last inequality follows from $\|\Rb_{\perp}^{\top}\x^*\|_1 \leq \|\x^*\|_1 \leq \sqrt{k}$, since $\xb^*$ has at most $k$ non-zero entries and Euclidean norm at most one.

Recall that the vector $\yb$ of~\algref{alg:sparse:pca} maximizes $\|\Sigmab_{\ell}^{1/2}\U_{\ell}^\top\Rb\xb\|_2$ over all vectors $\xb$ of appropriate dimensions (including $\Rb \x^*$) and thus
\begin{flalign}\label{eqn:sdp:svd:3}
  \|\Sigmab_{\ell}^{1/2}\U_{\ell}^\top\Rb\yb\|_2 \geq \norm{\Sigmab_{\ell}^{1/2}\U_{\ell}^\top\Rb\Rb^{\top}\x^*}_2.
\end{flalign}
Combining eqns.~(\ref{eqn:sdp:svd:1}),~(\ref{eqn:sdp:svd:2}), and~(\ref{eqn:sdp:svd:3}), we get that for sufficiently small $\eps$, 
\begin{flalign}\label{eqn:sdp:svd:4}
\norm{\Sigmab_{\ell}^{1/2}\U_{\ell}^\top\x^*}_2^2 \leq \|\Sigmab_{\ell}^{1/2}\U_{\ell}^\top\zb\|_2^2 + 2\varepsilon\trace(\Ab).
\end{flalign}
In the above we used $\zb = \Rb\yb$ (as in~\algref{alg:sparse:pca}) and $\sigma_1 \leq \trace(\Ab)$. Notice that
$$\U_{\ell}\Sigmab_{\ell}^{1/2}\U_{\ell}^\top\zb + \U_{\ell,\perp}\Sigmab_{\ell,\perp}^{1/2}\U_{\ell,\perp}^\top\zb=
\U\Sigmab^{1/2}\U^\top\zb,$$
and using the Pythagorean theorem we get
$$\|\U_{\ell}\Sigmab_{\ell}^{1/2}\U_{\ell}^\top\zb\|_2^2 + \|\U_{\ell,\perp}\Sigmab_{\ell,\perp}^{1/2}\U_{\ell,\perp}^\top\zb|_2^2=
\|\U\Sigmab^{1/2}\U^\top\zb\|_2^2.$$
Using the unitary invariance of the two norm and dropping a non-negative term, we get the bound
\begin{flalign}\label{eqn:sdp:svd:5}
\|\Sigmab_{\ell}^{1/2}\U_{\ell}^\top\zb\|_2^2 \leq
\|\Sigmab^{1/2}\U^\top\zb\|_2^2.
\end{flalign}
Combining eqns.~(\ref{eqn:sdp:svd:4}) and~(\ref{eqn:sdp:svd:5}), we conclude
\begin{flalign}\label{eqn:sdp:svd:6}
\norm{\Sigmab_{\ell}^{1/2}\U_{\ell}^\top\x^*}_2^2 \leq \|\Sigmab^{1/2}\U^\top\zb\|_2^2 + 2\varepsilon\trace(\Ab).
\end{flalign}
We now apply~\lemref{lem:sz:pcp} to the optimal vector $\xb^*$ to get
$$\norm{\Sigmab^{1/2} \Ub^\top \xb^*}_2^2- \eps\trace(\A) \leq \norm{\Sigmab_{\ell}^{1/2} \Ub_{\ell}^\top\xb^*}_2^2.$$
Combining with eqn.~(\ref{eqn:sdp:svd:6}) we get
$$\zb^\top \Ab \zb \geq\mathcal{Z}^* - 3\eps\trace(\A).$$
In the above we used
$\|\Sigmab^{1/2}\U^\top\zb\|_2^2 = \zb^\top \Ab \zb$ and
$\norm{\Sigmab^{1/2} \Ub^\top\xb^*}_2^2=(\xb^*)^\top \Ab \xb^* = \mathcal{Z}^*$. 
The result then follows from rescaling $\eps$. 
\end{proof}

\section{Additional Notes on Experiments}\label{app:expt}
For~\algref{alg:sparse:pca}, we fix the threshold parameter $\ell$ to $1$ for all datasets. For \algref{alg:sparse:sdp:det}, we rely on ADMM-based first-order methods to solve eqn.~\eqref{eqn:spdrelax}. More precisely, we use the \texttt{admm.spca()} function of the \texttt{ADMM} package in R~\citep{ma2013alternating} as well as Python's \texttt{cvxpy} package with \texttt{SCS} solver to solve eqn.~\eqref{eqn:spdrelax}. 
%
Next, we show the performance of our algorithms on \emph{pitprops} data:

\begin{table*}[ht]
	\begin{center}
		\resizebox{\textwidth}{!}{\begin{tabular}{|l| c c c c c c c c c c c c c |c c|}
				\hline
				~~& topdiam & length & moist & testsg & ovensg & ringtop & ringbut & bowmax & bowdist & whorls & clear& knots  & diaknot & PVE & $\mathcal{Z}^*$\\
				\hline
				\texttt{spca-svd}~~~~~~(\algref{alg:sparse:pca})& $~~0.420$ & $~~0.422$ & $0$ & $0$ & $0$ & $~~0.296$ & $~~0.416$ & $~~0.305$ & $~~0.371$ & $~~0.394$ &$0$ &$0$ &$0$ & $30.71\%$ & $3.993$\\
				\texttt{spca-sdp}~(\algref{alg:sparse:sdp:det})& $~~0.424$ & $~~0.430$ & $0$ & $0$ & $0$ & $~~0.268$ & $~~0.403$ & $~~0.313$ & $~~0.379$ & $~~0.399$ & $0$&$0$ &$0$ & $30.74\%$ & $3.996$\\
				\texttt{dec}~\citep{Yuan_2019_CVPR} & $-0.423$ & $-0.430$ & $0$ & $0$ & $0$ & $-0.268$ & $-0.403$ & $-0.313$ & $-0.379$ & $-0.399$ & 0 & 0 & 0 & $30.74\%$ & $3.996$\\
				\texttt{cwpca}~\citep{beck2016sparse} & $-0.423$ & $-0.430$ & $0$ & $0$ & $0$ & $-0.268$ & $-0.403$ & $-0.313$ & $-0.379$ & $-0.399$ & 0 & 0 & 0 & $30.74\%$ & $3.996$\\
				\texttt{spca-lowrank}~\citep{PDK2013} & $-0.427$ & $-0.432$ & $0$ & $0$ & $0$ & $-0.249$ & $-0.390$ & $-0.326$ & $-0.383$ & $-0.403$ & 0 & 0 & 0 & $30.72\%$ & $3.994$\\
				\hline
			\end{tabular}
		}
	\end{center}
	\caption{\small Loadings, \% of variance explained (PVE), and the objective function value for the first principal component of the Pit Props data.}
	\label{tab:multicol1}
\end{table*}

\subsection{Real Data}\label{app:real}
\textbf{Population genetics data.} We use population genetics data from the Human Genome Diversity Panel~\citep{int07} and the HAPMAP~\citep{li2008}. In particular, we use the 22 matrices (one for each chromosome) that encode all autosomal genotypes. Each matrix contains 2,240 rows and a varying number of columns that is equal to the number of single nucleotide polymorphisms (SNPs, well-known biallelic loci of genetic variation across the human genome) in the respective chromosome. The columns of each matrix were mean-centered as a preprocessing step. See Table~\ref{tab:SNPs} for summary statistics.

\textbf{Gene expression data.}
We also use a lung cancer gene expression dataset (GSE10072) from the NCBI Gene Expression Omnibus database~\citep{landi2008gene}. This dataset contains $107$ samples (58 cases and 49 controls) and 22,215 features. Both the population genetics and the gene expression datasets are interesting in the context of sparse PCA beyond numerical evaluations, since the sparse components can be directly interpreted to identify small sets of SNPs or genes that capture the data variance.


\subsection{Synthetic Data} \label{app:syn}
We also use a synthetic dataset generated  using the same mechanism as in~\citep{FountoulakisKKD17}. Specifically, we construct the $m\times n$ matrix $\Xb$ such that $\Xb=\Ub\Sigmab\Vb^\top+\Eb_\sigma$. Here, $\Eb_\sigma$ is a noise matrix, containing i.i.d. Gaussian elements with zero mean and we set $\sigma=10^{-3}$; $\Ub\in\mathbb{R}^{m\times m}$ is a Hadamard matrix with normalized columns; $\Sigmab=(\tilde{\Sigmab}~~\zero)\in\mathbb{R}^{m\times n}$ such that $\tilde{\Sigmab}\in\mathbb{R}^{m\times m}$ is a  diagonal matrix with $\tilde{\Sigmab}_{11}=100$ and $\tilde{\Sigmab}_{ii}=e^{-i}$ for $i=2,\dots ,m$; $\Vb\in\mathbb{R}^{n\times n}$ such that $\Vb=\Gb_n(\theta)\tilde{\Vb}$, where $\tilde{\Vb}\in\mathbb{R}^{n\times n}$ is also a Hadamard matrix with normalized columns and 
$$\Gb_n(\theta)=\Gb (i_1,i_1+1,\theta)\,\Gb(i_2,i_2+1,\theta)\dots \Gb(i_{n/4},i_{n/4}+1,\theta),$$
is a composition of $\frac{n}{4}$ Givens rotation matrices with $i_k=\frac{n}{2}+2k-1$ for $k=1,2,\dots, \frac{n}{4}$. Here $\Gb(i,j,\theta)\in\mathbb{R}^{n\times n}$ is a Givens rotation matrix, which rotates the plane
$i-j$ by an angle $\theta$. For $\theta\approx 0.27\pi$ and $n=2^{12}$, the matrix $\Gb_n(\theta)$ rotates the bottom $\frac{n}{2}$ components of the columns of $\tilde{\Vb}$, making half of them almost zero and the remaining half larger. Figure~\ref{fig:fig4} shows the absolute values of the elements of the first column of the matrices $\Vb$ and $\tilde{\Vb}$.

\subsection{Additional Experiments}\label{app:exp_addl}

In our additional experiments on the large datasets, Figure~\ref{fig:sfig7} shows the performance of various SPCA algorithms on synthetic data. We observe our algorithms perform optimally and  closely match with \texttt{dec}, \texttt{cwpca}, and \texttt{spca-lowrank}.
Moreover, notice that turning the bottom $\frac{n}{4}$ elements of $\tilde{\Vb}$ into large values doesn't affect the performances of \texttt{spca-svd} and \texttt{spca-sdp}, which further highlights the robustness of our methods. 
In Figure~\ref{fig:fig3}, we demonstrate how our algorithms perform on \textsc{Chr}~3 and \textsc{Chr}~4 of the population genetics data. We see a similar behavior as observed for \textsc{Chr}~1 and \textsc{Chr}~2 in Figures~\ref{fig:sfig1}-\ref{fig:sfig2}. 

\begin{figure}[!htb]
\begin{subfigure}{.5\textwidth}
  \centering
  \includegraphics[width=1\linewidth]{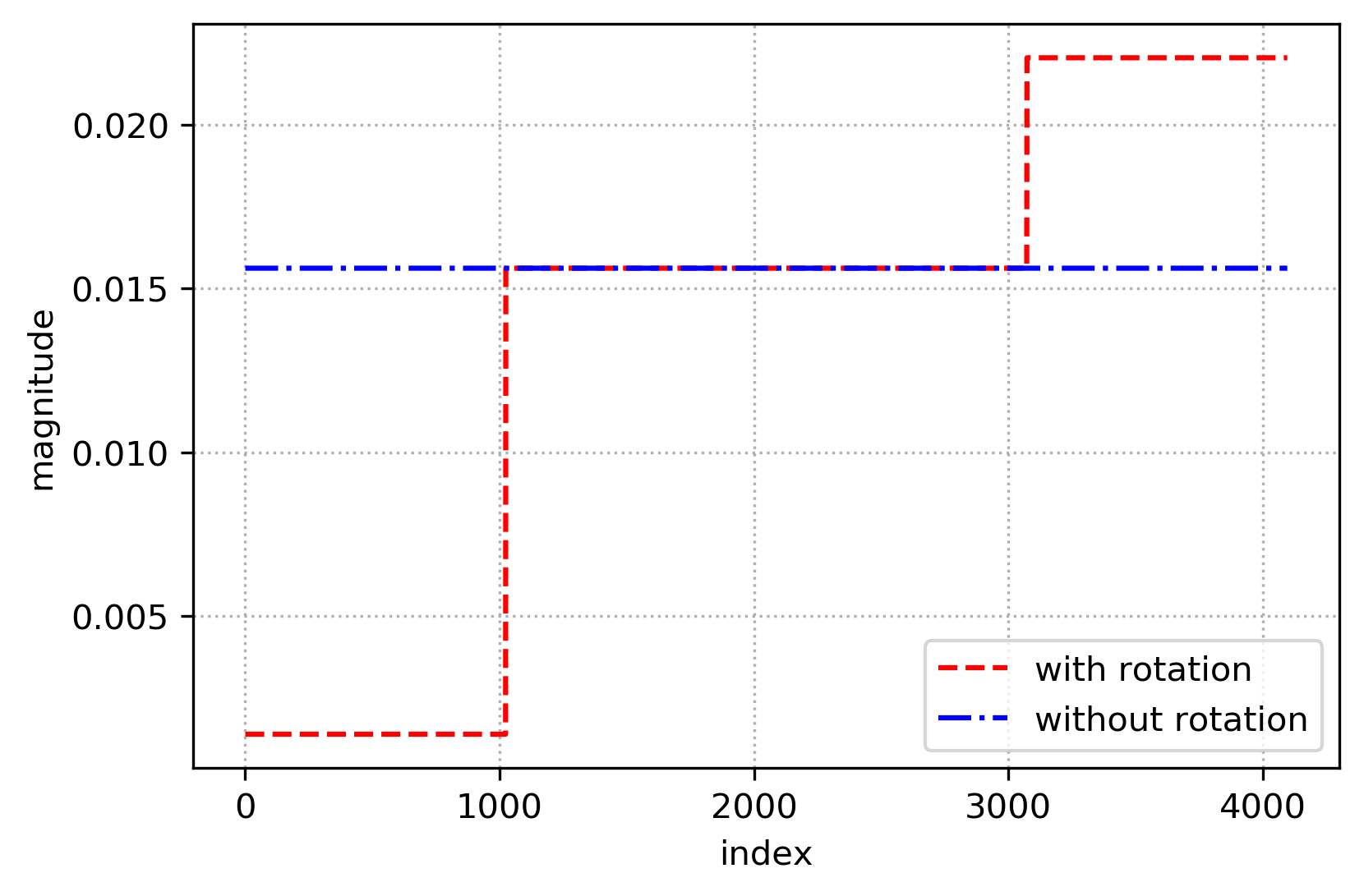}
  \caption{}
  \label{fig:sfig6}
\end{subfigure}%
\begin{subfigure}{.5\textwidth}
  \centering
  \includegraphics[width=1\linewidth]{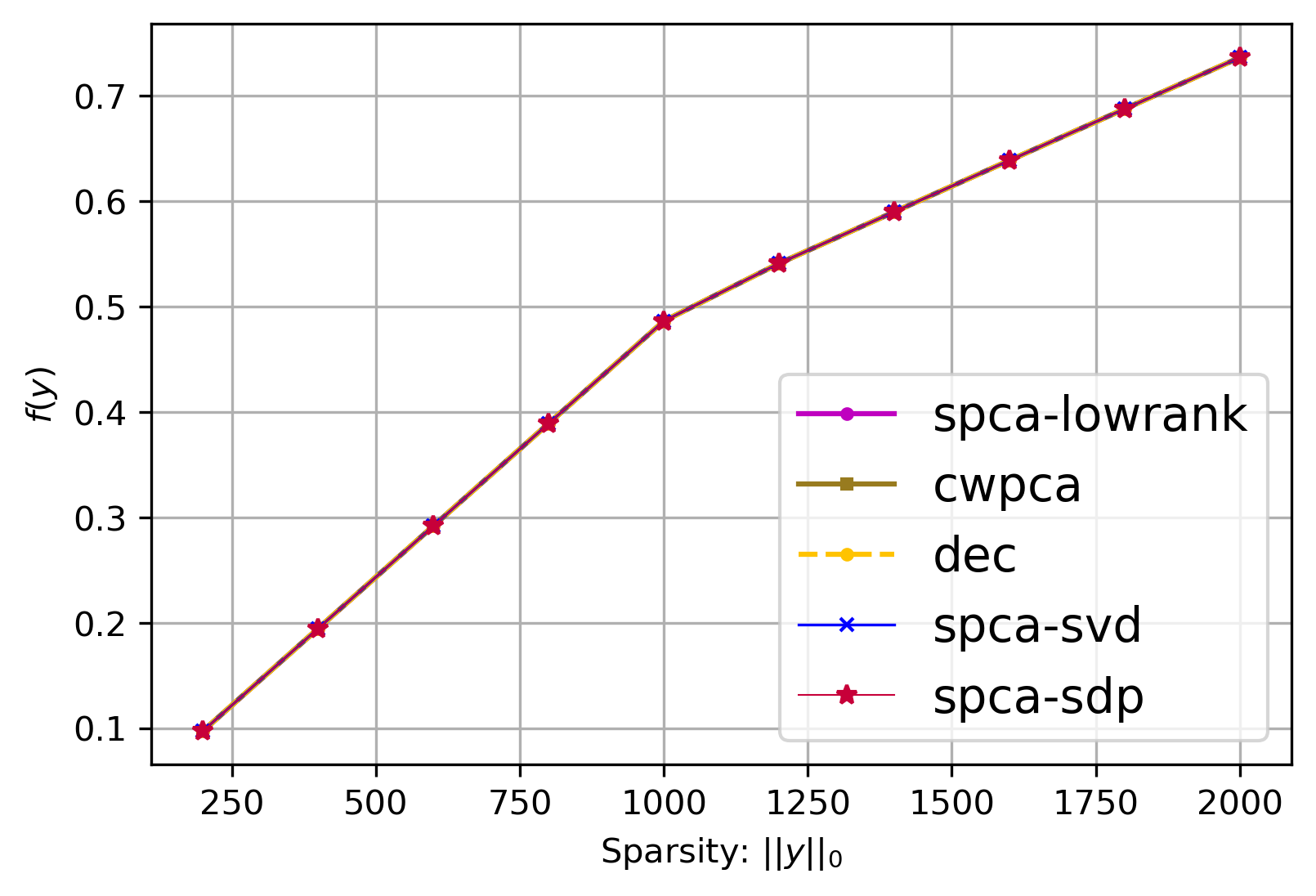}
  \caption{}
  \label{fig:sfig7}
\end{subfigure}
\caption{Experimental results on synthetic data with $m=2^7$ and $n=2^{12}$: (a) the red and the blue lines are the sorted absolute values of the elements of the first column
of matrices $\Vb$ and $\tilde{\Vb}$ respectively. (b) $f(\y)$ vs. sparsity ratio.}
\label{fig:fig4}
\end{figure}
\begin{figure}[!htb]
\begin{subfigure}{.32\textwidth}
  \centering
  \includegraphics[width=1\linewidth]{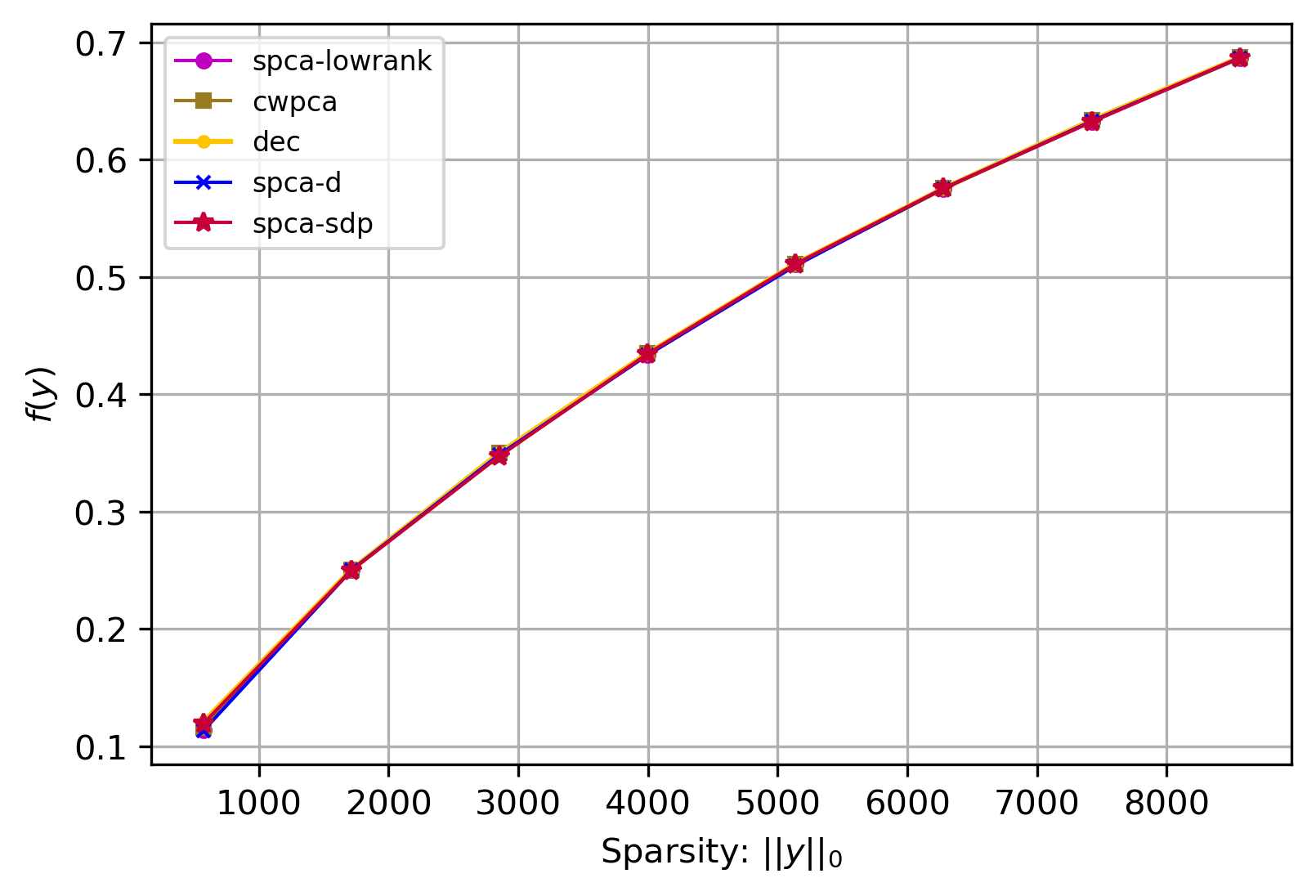}
  \caption{\textsc{Chr}~3, $n=34,258$}
  \label{fig:sfig4}
\end{subfigure}%
\begin{subfigure}{.32\textwidth}
  \centering
  \includegraphics[width=1\linewidth]{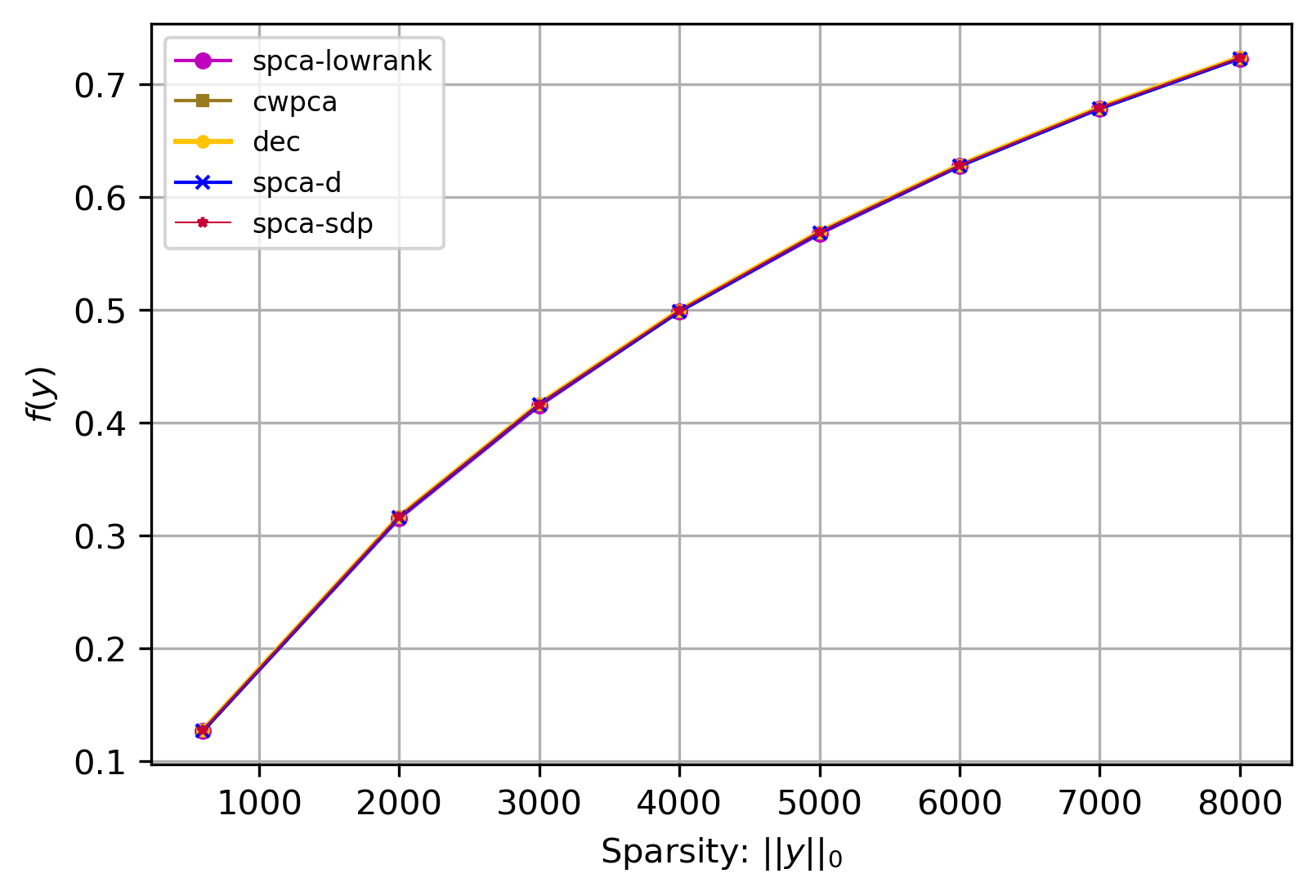}
  \caption{\textsc{Chr}~4, $n=30,328$}
  \label{fig:sfig5}
\end{subfigure}
\begin{subfigure}{.32\textwidth}
  \centering
  \includegraphics[width=1\linewidth]{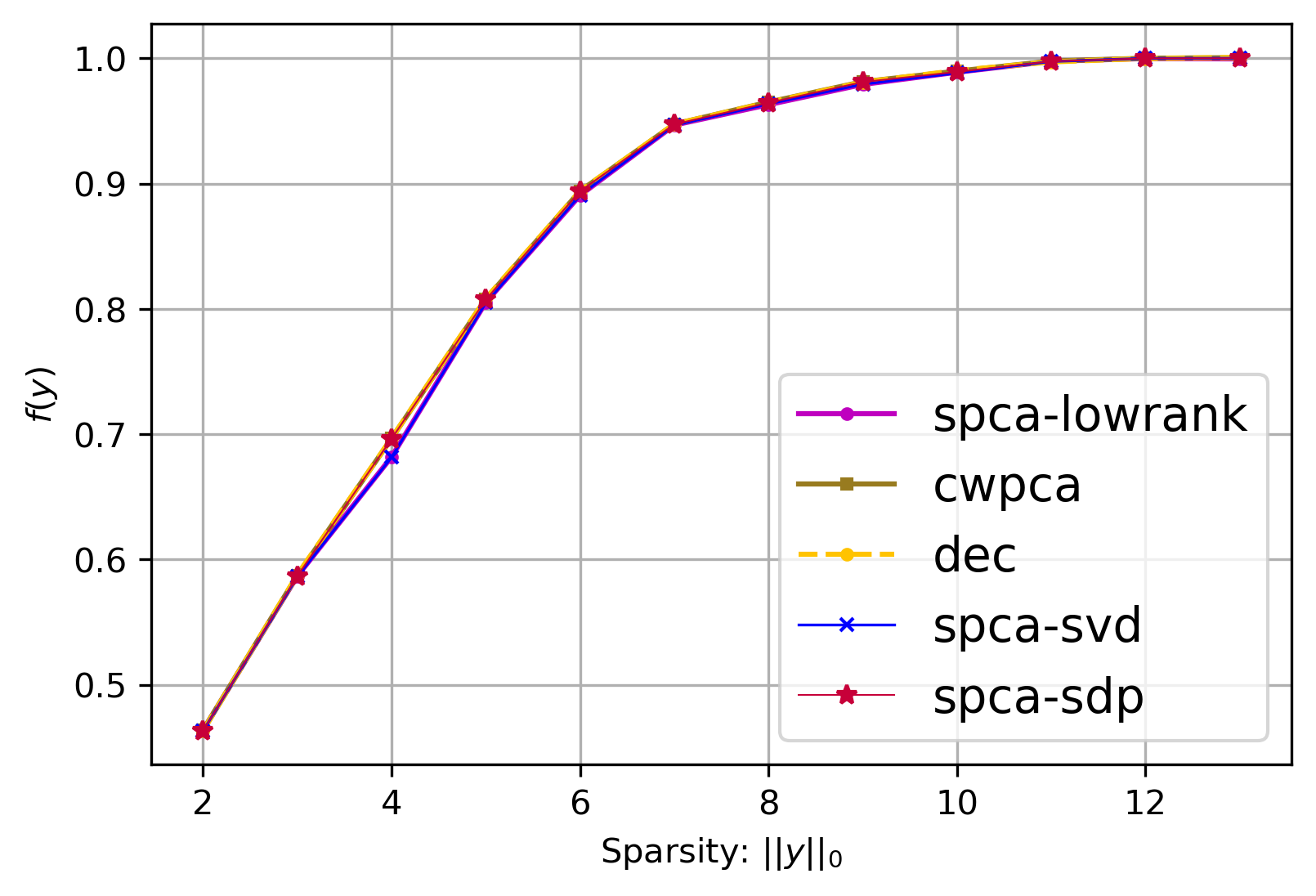}
  \caption{Pitprops data}
  \label{fig:sfig6}
\end{subfigure}
\caption{Experimental results on real data: $f(\y)$ vs. sparsity ratio.}
\label{fig:fig3}
\end{figure}

\begin{table}[!htb]
\centering
\caption{Statistics of the population genetics data.}
\label{tab:SNPs}
\begin{tabular}{rrcc}
\toprule
Dataset & \#\,Rows & \#\,Columns & Density\\
\midrule
\textsc{Chr}~1 & 2,240 & 37,493 & 0.986 \\
\textsc{Chr}~2 & 2,240 & 40,844 & 0.987 \\
\textsc{Chr}~3 & 2,240 & 34,258 & 0.986 \\
\textsc{Chr}~4 & 2,240 & 30,328 & 0.986 \\
Gene expression & 107& 22,215 & 0.999 \\
\bottomrule
\end{tabular}
\end{table}

\end{document}